\providecommand{\makenomenclature}{\makeglossary}
\providecommand{\tabularnewline}{\\}
\pgfplotsset{compat=newest}
\pgfplotsset{width=\textwidth,compat=1.8}
\newtheorem{prop}{Proposition}
\newtheorem{lemma}{Lemma}
\theoremstyle{plain}
\tikzset{state/.style={
	rectangle,
  	fill=#1!5!white,
    draw=#1, very thick,
    minimum height=2em,
    inner sep=2pt,
    text centered},
    statecirc/.style={
    circle,
    fill=#1!5!white,
    draw=#1, very thick,
    minimum height=2em,
    inner sep=2pt,
    text centered},
    highlight/.style={
		rectangle,
		fill=#1!50!white,rounded corners,draw=#1,very thick,
        minimum height=2em,
        inner sep=2pt,
        text centered},
	coeff/.style={
           circle,
           draw=black, very thick,
           minimum height=2em,
           inner sep=2pt,
           text centered},
	ptNode/.style={
		circle, 
		fill=black,thick, 
		inner sep=2pt, minimum size=0.2cm},
	square/.style={regular polygon sides=4},
	ptNodeSq/.style={square, fill=black,thick, inner sep=2pt, minimum size=0.2cm}
}
\tikzset{
	me/.style={
		to path={ 
		\pgfextra{%
		\pgfmathsetmacro{\startf}{-(#1-1)/2}    
		\pgfmathsetmacro{\endf}{-\startf}   
		\pgfmathsetmacro{\stepf}{\startf+1}}  
			\ifnum 1=#1 -- (\tikztotarget)  
			\else
				let \p{mid}=($(\tikztostart)!0.5!(\tikztotarget)$)           
				in	
				\foreach \i in {\startf,\stepf,...,\endf}     {%
				 (\tikztostart) .. controls ($ (\p{mid})!\i*6pt!90:(\tikztotarget) $) .. (\tikztotarget)       
				}       
			\fi         
		\tikztonodes     
		}
	}
} 
\tikzset{
    *|/.style={
        to path={
            (perpendicular cs: horizontal line through={(\tikztostart)},
                                 vertical line through={(\tikztotarget)})
            -- (\tikztotarget) \tikztonodes
        }
    }
}
\tikzset{
    invisible/.style={opacity=0},
    visible on/.style={alt=#1{}{invisible}},
    alt/.code args={<#1>#2#3}{%
      \alt<#1>{\pgfkeysalso{#2}}{\pgfkeysalso{#3}} 
    },
  }
\tikzset{fontscale/.style = {font=\relsize{#1}}
    }
\newcommand*\ghostPart{%
\startcontents
\phantomsection
\addcontentsline{toc}{part}{}%
\@endpart}
\definecolor{lightgray}{gray}{0.9}
\definecolor{S_purple}{RGB}{204, 0, 204}
\definecolor{S_brique}{RGB}{204, 51, 0}
\definecolor{S_petrol}{RGB}{0, 102, 153}
\definecolor{S_green}{RGB}{0, 153, 0}
\definecolor{darkorange}{RGB} {255,165,0}
\definecolor{yellowfluo}{RGB} {255,255,153}
\renewcommand{\nomgroup}[1]{%
\ifthenelse{\equal{#1}{B}}{\item[\textbf{Hb-graphs}]}{%
\ifthenelse{\equal{#1}{A}}{\item[\textbf{Hypergraphs}]}{%
\ifthenelse{\equal{#1}{T}}{\item[\textbf{Tensors}]}{%
\ifthenelse{\equal{#1}{X}}{\item[\textbf{General}]}
{}
}
}
}
}%
\newcommand*\circled[1]{\tikz[baseline=(char.base)]{
            \node[shape=circle,draw,inner sep=2pt] (char) {#1};}}
\newcommand*{\DotsAndPage}
{\nobreak\leaders\hbox{\bfseries\normalsize\hbox to .75ex {\hss.\hss}}%
\hfill\nobreak
\makebox[\rightskip][r]{\bfseries\normalsize\etocpage}\par}
\newcommand*{\DotsAndPageSec}
{\nobreak\leaders\hbox{\bfseries\small\hbox to .75ex {\hss.\hss}}%
\hfill\nobreak
\makebox[\rightskip][r]{\bfseries\small\etocpage}\par}
\newcommand*{\DotsAndPageSub}
{\nobreak\leaders\hbox{\footnotesize\sffamily\mdseries\itshape\hbox to .75ex {\hss.\hss}}%
\hfill\nobreak
\makebox[\rightskip][r]{\footnotesize{\sffamily\mdseries{\etocpage}}}\par}
\newcommand{\alltoc}[1]{%
\begingroup

\etocsetstyle {part}
{\parindent 0pt
\nobreak
\etocskipfirstprefix}
{\leftskip 0pt \rightskip .75cm \parfillskip-\rightskip
\pagebreak[1]\bigskip}
{\par\rule{\textwidth}{0.7pt}\par\normalsize\rmfamily\bfseries\scshape
\etocifnumbered{Part \etocnumber{} }{}\etocname\par\rule{\textwidth}{0.7pt}\par
}
{}

\etocsetstyle {chapter}
{\leftskip 0pt \rightskip .75cm \parfillskip-\rightskip
\nobreak\medskip
\etocskipfirstprefix}
{\leftskip 0pt \rightskip .75cm \parfillskip-\rightskip
\pagebreak[1]\smallskip}
{\normalsize\rmfamily\bfseries\scshape
\etocifnumbered{\etocnumber. }{}\etocname\DotsAndPage }
{\parfillskip 0pt plus 1fil\relax }

\etocsetstyle {section}
{\leftskip 0.5cm \rightskip .75cm \parfillskip-\rightskip
\nobreak\medskip
\etocskipfirstprefix}
{\leftskip 0.5cm \rightskip .75cm \parfillskip-\rightskip
\pagebreak[1]\smallskip}
{\small\rmfamily\bfseries\scshape
\etocnumber. \etocname\DotsAndPageSec }
{\parfillskip 0.5cm plus 1fil\relax }

\etocsetstyle {subsection}
{\leftskip1cm\rightskip 0.75cm \parfillskip-\rightskip
\nobreak\medskip
\etocskipfirstprefix}
{\leftskip 1cm \rightskip .75cm \parfillskip-\rightskip
\pagebreak[1]\smallskip}
{\footnotesize\sffamily\mdseries\itshape
{\etocnumber. }\etocname{} \DotsAndPageSub }
{\parfillskip 0.7cm \relax}

\etocsetstyle {subsubsection}
{\leftskip1.5cm\rightskip .75cm \parfillskip 0pt plus 1fil\relax
\nobreak\smallskip}
{\leftskip 1.5cm \rightskip .75cm \parfillskip-\rightskip
\pagebreak[1]\smallskip}
{\footnotesize\sffamily\mdseries
{\etocnumber. }\etocname{} \DotsAndPageSub }
{\parfillskip 0.7cm \relax}

\etocsettagdepth {preamble} {none}
\etocsettagdepth {linestyles} {subsection}
\etocsettagdepth {globalcmds} {subsection}
\etocsettagdepth {custom} {none}
\etocsettocstyle {\centering\LARGE\textsc{\contentsname}\par\nobreak\medskip}{}
\renewcommand{\baselinestretch}{0.75}\normalsize
\tableofcontents
\renewcommand{\baselinestretch}{1.0}\normalsize
\endgroup
}
\def\drawpolygon#1,#2,#3;{     
	\begin{pgfonlayer}{background}         		     
	\filldraw[line width=8,join=round,#1!30!white,opacity=0.6](#2)foreach\A in{#3}{--(\A)}--cycle;
	\filldraw[line width=5,join=round,#1!20!white,opacity=0.6](#2)foreach\A in{#3}{--(\A)}--cycle;     	\end{pgfonlayer} 
}
\def\drawpolygonwithborder#1,#2,#3;{     
	\begin{pgfonlayer}{background}         		     
	\filldraw[line width=20,join=round,#1!30!white](#2)foreach\A in{#3}{--(\A)}--cycle;     	 
	\filldraw[line width=18,join=round,#1!10!white](#2)foreach\A in{#3}{--(\A)}--cycle;     	
	\end{pgfonlayer} 
}
\definecolor{blue}{rgb}{0.03, 0.35, 0.49}
\begin{document}
\title{Hypergraphs: an introduction and review}
\author{
	Xavier Ouvrard\thanks{xavier.ouvrard@cern.ch}\\
CERN,\\ 
Esplanade des Particules, 1,\\ 
CH-1211 Meyrin (Switzerland)
}
\maketitle
\begin{abstract}
Hypergraphs were introduced in 1973 by Berge. This review aims at giving some hints on the main results that we can find in the literature, both on the mathematical side and on their practical usage. Particularly, different definitions of hypergraphs are compared, some unpublished work on the visualisation of large hypergraphs done by the author. This review does not pretend to be exhaustive.
\end{abstract}

In 1736, Euler was the first to use a graph approach to solve the
problem of the Seven Bridges of Königsberg. In 1878, Sylvester coined
the word graph itself. Graphs are extensively used in various domains.
A lot of developments on graphs have been realized during the first
half of the twentieth century, but graphs are still an active field
of research and applications, in mathematics, physics, computer science
and various other fields, ranging from biology to economic and social
network analysis. 

In \citet{berge1967graphes,berge1973graphs}, the author introduces
hypergraphs as a means to generalize the graph approach. Graphs only
support pairwise relationships. Hypergraphs preserve the multi-adic
relationships and, therefore, become a natural modeling of collaboration
networks and various other situations. They already allow a huge step
in modeling, but some limitations remain that will be discussed further
in this article that has pushed us to introduce an extension of hypergraphs.
In this Section, we aim at giving a synthesis of the hypergraphs.
Further details on hypergraphs can be found either in \citet{berge1967graphes,berge1973graphs},
in \citet{voloshin2009introduction} or in \citet{bretto2013hypergraph}
and in many of the references cited throughout this article.

\section{Generalities}

\label{subsec:Background Hypergraph Generalities}

As given in \citet{berge1967graphes,berge1973graphs}, a \textbf{hypergraph}
$\ensuremath{\mathcal{H}=\left(V,E\right)}$\nomenclature[A,0001]{$\mathcal{H}=\left(V,E\right)$}{Hypergraph of vertex set V and hyperedge set E}\index{hypergraph}
on a finite \textbf{set of vertices}\index{vertex!hypergraph}\index{hypergraph!vertex}
(or nodes) $V\overset{\Delta}{=}\left\{ v_{i}:i\in\left\llbracket n\right\rrbracket \right\} $\footnote{We write for $n\in\mathbb{N}^{*}:$ $\left\llbracket n\right\rrbracket =\left\{ i:i\in\mathbb{N^{*}}\land i\leqslant n\right\} $}\nomenclature[X]{$\left\llbracket n\right\rrbracket$}{Integers from 1 to n}\nomenclature[A,0002]{$V=\left\{ v_{i}:i\in\left\llbracket n\right\rrbracket \right\}$}{Vertex set of a hypergraph $\mathcal{H}$}
is defined as a family of hyperedges\index{hyperedge!hypergraph}\index{hypergraph!hyperedge}
$E\overset{\Delta}{=}\left(e_{j}\right)_{j\in\left\llbracket p\right\rrbracket }$\nomenclature[A,0003]{$E=\left(e_{j}\right)_{j\in\left\llbracket p\right\rrbracket }$}{Hyperedge family of a hypergraph}
where each \textbf{hyperedge} is a non-empty subset of $V$ and such
that $\bigcup\limits _{j\in\left\llbracket p\right\rrbracket }e_{j}=V.$

It means that in a hypergraph, a hyperedge links one or more vertices.
In \citet{voloshin2009introduction}, the definition of hypergraphs
includes also hyperedges that are empty sets as hyperedges are defined
as a family of subsets of a finite vertex set and it is not necessary
that the union covers the vertex set. Both the vertex set and the
family of hyperedges can be empty; it they are at the same time, the
hypergraph is then designated as the \textbf{empty hypergraph}. This
definition of hypergraph opens their use in various collaboration
networks. It is the one we choose in this Thesis.

In \citet{bretto2013hypergraph}, an intermediate definition is taken,
relaxing only the covering of the vertex set by the union of the hyperedges
enabling only isolated vertices in hypergraphs.

Other interesting definitions of hypergraphs exist. Two of them are
given in \citet{stell2012relations}.

A hypergraph $\mathcal{H}$ is firstly defined as consisting of a
set $V$ of vertices and a set $E$ of hyperedges with an \textbf{incidence
function}\index{incidence function!hypergraph}\index{hypergraph!incidence function}
$\iota\,:\,E\rightarrow\mathcal{P}\left(V\right),$ where $\mathcal{P}\left(V\right)$
is the poset of $V$. Also $\mathcal{H}=\left(V,E,\iota\right).$\nomenclature[A,0004]{$\mathcal{H}=\left(V,E,\iota\right)$}{Hypergraph of vertex set V, hyperedge set E and incidence relation $\iota$}

The author emphasizes the fact that this definition allows several
hyperedges to be incident to the same set of vertices and, also, to
have hyperedges that are linked to the empty set of vertices. This
definition is tightly linked to the one of \citet{bretto2013hypergraph},
with the advantage of having the incident function to link hyperedges
and their corresponding set of vertices, instead of an equality.

A hypergraph is secondly defined in \citet{stell2012relations} as
consisting of a set $\mathcal{H}$ containing both vertices and hyperedges
and a \textbf{binary relation} $\varphi$\index{binary relation!hypergraph}\index{hypergraph!binary relation}
such that:
\begin{enumerate}
\item $\forall x\in\mathcal{H},\,\forall y\in\mathcal{H},\,\,\,\,(x,y)\in\varphi\implies(y,y)\in\varphi;$
\item $\forall x\in\mathcal{H},\,\forall y\in\mathcal{H},\,\forall z\in\mathcal{H},\,\,\,(x,y)\in\varphi\land(y,z)\in\varphi\implies y=z.$
\end{enumerate}
The link can be made with the other definitions in considering the
set $V$ such that:
\[
V\overset{\Delta}{=}\left\{ v\in\mathcal{H}:\left(v,v\right)\in\varphi\right\} 
\]
which corresponds to the set of vertices of the hypergraph $\mathcal{H}$
and the set $E$ such that:
\[
E\overset{\Delta}{=}\left\{ e\in\mathcal{H}:\,\left(e,e\right)\notin\varphi\right\} 
\]
which corresponds to the set of hyperedges of the hypergraph.

This last definition treats on the same footing vertices and hyperedges.
Nonetheless, in this definition, the set aspect of hyperedges is implicit.
Effectively, to know which vertex set a hyperedge $e$ contains, the
information needs to be rebuilt---using $\iota(e)$ as in the former
definition---: 
\[
\iota(e)=\left\{ v\in\mathcal{H}:\,\left(e,v\right)\in\varphi\right\} .
\]

This definition leads to a representation of the hypergraph as a directed
multi-graph where the vertices point to themselves and hyperedges
are linked to vertices in this order.

Depending on the needs, either the first definition of \citet{stell2012relations}
or equivalently the one given in \citet{bretto2013hypergraph} without
specifying the incidence relation will be used and referred as poset
definition of hypergraphs---\textbf{PoDef}\index{PoDef}\index{hypergraph!PoDef}
for short. Switching between these two definitions reverts to abusively
identifying the hyperedge with the subset of vertices it identifies.
The second definition of \citet{stell2012relations} will also be
used---referred to as \textbf{MuRelDef}\index{MuRelDef}\index{hypergraph!MuRelDef}.

When using a PoDef defined hypergraph, $\mathcal{H}=\left(V,E,\iota\right)$
designates a hypergraph. If the MuRelDef is used, the hypergraph $\mathcal{H}=\left(U,\varphi\right)$
is used, with $U=V\cup E,$ with $V=\left\{ v\in\mathcal{H}:\left(v,v\right)\in\varphi\right\} $
the set of vertices and $E=\left\{ e\in\mathcal{H}:\,\left(e,e\right)\notin\varphi\right\} .$
In the following definitions, we put in perspective these two approaches.

\section{Particular hypergraphs}

When needed, elements of $V$ will be written $\left(v_{i}\right)_{i\in\left\llbracket n\right\rrbracket }$
and those of $E$ will be written as $\left(e_{j}\right)_{j\in\left\llbracket p\right\rrbracket }.$
Abusively, $e_{j}$ will also designate the subset $\iota\left(e_{j}\right)$
of $V.$

Two hyperedges $e_{j_{1}}\in E$ and $e_{j_{2}}\in E$, with $j_{1},j_{2}\in\left\llbracket p\right\rrbracket $
and $j_{1}\neq j_{2}$ such that $e_{j_{1}}=e_{j_{2}}$ are said \textbf{repeated
hyperedges}\index{repeated hyperedges}\index{hypergraph!repeated hyperedges}\index{hyperedge!repeated}.

A hypergraph is said with \textbf{no repeated hyperedge}\index{no repeated!hyperedge}\index{hypergraph!no repeated hyperedge}\index{hyperedge!no repeated},
if it has no repeated hyperedges.

Following \citet{chazelle1988deterministic}, where the hyperedge
family is viewed as a multiset of hyperedges, a hypergraph with repeated
hyperedges is called a \textbf{multi-hypergraph}.\index{multi-hypergraph}\index{hypergraph!multi-hypergraph}

A hypergraph is said \textbf{simple}\index{hypergraph!simple}\index{simple!hypergraph},
if for any two hyperedges $e_{j_{1}}\in E$ and $e_{j_{2}}\in E:$
\[
e_{j_{1}}\subseteq e_{j_{2}}\Rightarrow j_{1}=j_{2}.
\]
Hence, a simple hypergraph has no repeated hyperedges.

A hyperedge $e\in E$ such that: $\left|e\right|=1$ is called a \textbf{loop}.

A hypergraph is said \textbf{linear}\index{hypergraph!linear}\index{linear!hypergraph}
if it is simple and such that every pair of hyperedges shares at most
one vertex.

A \textbf{sub-hypergraph}\index{sub-hypergraph}\index{hypergraph!sub-hypergraph}
$\mathcal{K}$ of a hypergraph $\mathcal{H}$ is the hypergraph formed
by a subset $W$ of the vertices of $\mathcal{H}$ and the hyperedges
of $\mathcal{H}$ that are subsets of $W$.

Formally, with 
\begin{itemize}
\item PoDef: $\mathcal{K}=\left(W,F,\iota|_{F}\right)$, such that: $F=\left\{ e_{j}\in E\,:\,\iota|_{F}\left(e_{j}\right)\subseteq W\right\} ;$
\item MuRelDef: $\mathcal{K}=\left(T,\varphi\right)$, with $T\subseteq U$,
where $\forall t\in T,\,\forall u\in U:\,\left(t,u\right)\in\varphi\implies u\in T$
and $\forall t\in T:\,\left(t,t\right)\in\varphi\implies t\in W.$
\end{itemize}
A \textbf{partial hypergraph}\index{partial hypergraph}\index{hypergraph!partial hypergraph}
$\mathcal{H}'$ generated by a subset $E'\subseteq E$ of the hyperedges
is a hypergraph containing exactly these hyperedges and whose vertex
set contains at least all the vertices incident to this hyperedge
subset.

Formally, with:
\begin{itemize}
\item PoDef: $\mathcal{H}'\overset{\Delta}{=}\left(V',E',\iota|_{E'}\right)$,
where $E'=\left\{ e_{j}\,:\,j\in K'\right\} \subseteq E$ with $K'\subseteq\left\llbracket p\right\rrbracket $
and $V'$ is such that $\bigcup\limits _{k\in K'}\iota|_{E'}\left(e_{k}\right)\subseteq V';$
\item MuRelDef: $\mathcal{H}'\overset{\Delta}{=}\left(T',\varphi\right)$,
with $T'\subseteq U$, where $\forall t\in U,\forall t'\in T':\,\left(t,t'\right)\in\varphi\implies t\in T'$
and $\forall t\in T':\,\left(t,t\right)\notin\varphi\implies t\in E'.$
\end{itemize}

\section{Duality}

The \textbf{star of a vertex $v\in V$} of a hypergraph $\mathcal{H}$,
written $H(v)$,\index{star}\index{hypergraph!vertex!star}\index{vertex!hypergraph!star}\nomenclature[A,0005]{$H(v)$}{Star of a vertex $v \in V$}
is the family of hyperedges containing $v.$

Also with:
\begin{itemize}
\item PoDef: $H(v)\overset{\Delta}{=}\left\{ e_{k}\in E:\,v\in\iota\left(e_{k}\right)\right\} ;$
\item MuRelDef: $H(v)\overset{\Delta}{=}\left\{ t\in U:\,\left(t,v\right)\in\varphi\land\left(t,t\right)\notin\varphi\right\} .$
\end{itemize}
The \textbf{dual}\index{dual!hypergraph}\index{hypergraph!dual}
of a hypergraph $\mathcal{H}$ is the hypergraph $\mathcal{H}^{*}$\nomenclature[A,0006]{$\mathcal{H}^{*}$}{Dual of a hypergraph $\mathcal{H}$}
whose vertices corresponds to the hyperedges of $\mathcal{H}$ and
the hyperedges of $\mathcal{H}^{*}$ are the vertices of $\mathcal{H},$
with the incident relation that links each vertex to its star.

Formally with:
\begin{itemize}
\item PoDef: The dual is given by $\mathcal{H}^{*}=\left(V^{*},E^{*},\iota^{*}\right),$
with $V^{*}=E$ and where $E^{*}=\left\{ v\in V\right\} $ and $\forall v\in V:\,\iota^{*}\left(v\right)=H\left(v\right);$
\item MuRelDef: The dual is given by: $\left(U,\varphi^{*}\right)$ where
$\varphi^{*}\overset{\Delta}{=}\lnot\varphi\cap\left(1^{\prime}\cup\smallsmile\varphi\right),$
with:
\item $\lnot\varphi$ the complement of the relation $\varphi:$ $\forall x\in U,\forall y\in U:\,\,\,\left(x,y\right)\in\lnot\varphi\Leftrightarrow(x,y)\notin\varphi;$
\item 1' the binary identity: $\forall x\in U,\forall y\in U:\,\,\,(x,y)\in1^{\prime}\Leftrightarrow x=y;$
\item $\smallsmile\varphi$ the converse relation: $\forall x\in U,\forall y\in U:\,\,\,\left(x,y\right)\in\smallsmile\varphi\Leftrightarrow(y,x)\in\varphi.$
\end{itemize}
\begin{proof}[Proof for (MuRelDef)](not given in \citet{stell2012relations})
Let $x\in U$ such that $\left(x,x\right)\in\varphi^{*}.$ Then $\left(x,x\right)\in\lnot\varphi$
and $x$ is a hyperedge of $\left(U,\varphi\right).$

If $x\in U$ such that $\left(x,x\right)\notin\varphi^{*},$ then
$\left(x,x\right)\notin\lnot\varphi$ and $x$ is a vertex of $\left(U,\varphi\right)$
as $\left(x,x\right)\in1^{\prime}$ and, therefore: $(x,x)\in1'\cup\smallsmile\varphi.$

Let $\left(x,y\right)\in\varphi^{*}$ and suppose that $x\neq y.$
Then as $\varphi^{*}=\lnot\varphi\cap\left(1^{\prime}\cup\smallsmile\varphi\right),$
we have: $\left(x,y\right)\in1^{\prime}\cup\smallsmile\varphi.$ As
$x\neq y,$ $\left(x,y\right)\notin1^{\prime},$ so we have $\left(x,y\right)\in\smallsmile\varphi,$
i.e.\ $\left(y,x\right)\in\varphi.$ It implies that $\left(x,x\right)\in\varphi$
and thus a vertex of $\left(U,\varphi\right).$ As $y$ is a vertex
of $\left(U,\varphi^{*}\right),$ it is a hyperedge of $\left(U,\varphi\right).$

\end{proof}

Some interesting properties can be derived from the MuRelDef and might
be useful for computation.

\begin{prop}

Let $\left(U,\varphi\right)$ be a hypergraph. 

The set of vertices is given by the binary relation: $\nu\overset{\Delta}{=}\varphi\cap1'.$

The set of hyperedges is given by the binary relation: $\eta\overset{\Delta}{=}\varphi\cap\left(\lnot\nu\right).$

\end{prop}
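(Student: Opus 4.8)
The plan is to handle the two asserted identities one at a time, in each case expanding the intersection that defines the relation and comparing it with the membership tests $(x,x)\in\varphi$ and $(x,x)\notin\varphi$ that, by definition, separate $U$ into the vertex set $V$ and the hyperedge set $E$. Neither defining condition on $\varphi$ is needed for the vertex claim; both (and in particular the second) will be needed to pin down $\eta$.

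For $\nu=\varphi\cap 1'$ I would reason purely from the definitions. A pair $(x,y)$ lies in $\nu$ iff $(x,y)\in\varphi$ and $(x,y)\in 1'$, and membership in $1'$ forces $x=y$; hence $\nu$ consists only of diagonal pairs, and $(x,x)\in\nu$ holds exactly when $(x,x)\in\varphi$, i.e.\ exactly when $x\in V$. Thus $\nu=\{(v,v):v\in V\}$ is the identity restricted to $V$, and its field is precisely the vertex set, which is the first claim.

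For $\eta=\varphi\cap\lnot\nu$ I would first simplify. Since $\nu\subseteq\varphi$, intersecting $\varphi$ with the complement of $\nu$ merely deletes the pairs of $\nu$, so $\eta=\varphi\setminus\nu$; and because the only diagonal pairs of $\varphi$ are the $(v,v)$ with $v\in V$, which are exactly the pairs of $\nu$, what survives is precisely the off-diagonal part of $\varphi$, namely $\eta=\{(x,y)\in\varphi:x\neq y\}$. It then remains to see that the sources of these pairs are exactly the hyperedges. The first condition on $\varphi$ gives $(y,y)\in\varphi$, so every such target $y$ is a vertex; the second condition, applied to $(x,x)\in\varphi$ together with any $(x,y)\in\varphi$, forces $x=y$, so a vertex emits no off-diagonal pair and hence never occurs as a source of $\eta$. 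Conversely, any element with $(x,x)\notin\varphi$ is a hyperedge, and if it is incident to some vertex $v$ then $v\neq x$ (as $(v,v)\in\varphi$ while $(x,x)\notin\varphi$), giving an off-diagonal pair $(x,v)\in\eta$. This identifies $\eta$ with the hyperedge-to-vertex incidence relation and exhibits its source set as the hyperedge set $E$.

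The main obstacle is the hyperedge direction, and it is where the second, functional-type condition earns its place: ruling out that a vertex could be the source of an off-diagonal pair is precisely what keeps $V$ out of the source set of $\eta$, and it is the only step where a defining condition is genuinely invoked rather than a definition merely unfolded. A secondary subtlety is the treatment of empty hyperedges: under the permissive convention, a hyperedge incident to no vertex contributes no pair to $\eta$ and is therefore missed by its source set, so the hyperedge claim should either be read under the non-empty-hyperedge convention or stated up to empty hyperedges. Modulo this caveat, the two computations above establish both identities.
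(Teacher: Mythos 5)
Your proof is correct and follows the same core computation as the paper's: $\nu$ is the diagonal part of $\varphi$, hence the vertex set, and $\eta=\varphi\setminus\nu$ is the off-diagonal part of $\varphi$, i.e.\ the hyperedge-to-vertex incidence. You actually go further than the paper, which stops at identifying $\eta$ with ``the hyperedges' content'' and never invokes the two defining conditions on $\varphi$; your use of the second (functional-type) axiom to show that a vertex emits no off-diagonal pair, and of the first to show every target is a vertex, is exactly what is needed to justify reading the source set of $\eta$ as the hyperedge set. Your caveat about empty hyperedges being invisible to $\eta$ is also a genuine point the paper silently ignores.
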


\begin{proof}

Let $x\in U$: $\left(x,x\right)\in\varphi\Leftrightarrow\left(x,x\right)\in\varphi\cap1',$
so $\nu$ gives exactly the vertices of $\left(U,\varphi\right).$

Moreover: $\left(x,x\right)\in\varphi\Leftrightarrow\left(x,x\right)\notin\lnot\nu,$
so $\eta=\varphi\cap\left(\lnot\nu\right)$ does not hold the vertices
of $\left(U,\varphi\right).$

Let us take $x,y\in U$ with $x\neq y.$ We have $\left(x,y\right)\in\varphi\implies\left(x,y\right)\notin\varphi\cap1'\Leftrightarrow\left(x,y\right)\in\lnot\left(\varphi\cap1'\right).$
Therefore: $\left(x,y\right)\in\varphi\implies\left(x,y\right)\in\varphi\cap\left(\lnot\nu\right).$ 

Reciprocally, if $x\neq y:$ $\left(x,y\right)\in\varphi\cap\left(\lnot\nu\right)\implies\left(x,y\right)\in\varphi,$
then $\eta$ gives exactly the hyperedges' content of $\left(U,\varphi\right).$\end{proof}

\begin{prop}

Using the dual relation:

The set of hyperedges can be retrieved by the binary relation: $\nu^{*}\overset{\Delta}{=}\varphi^{*}\cap1'.$

The set of incident hyperedges can be retrieved by the binary relation:
$\eta^{*}\overset{\Delta}{=}\varphi^{*}\cap\left(\lnot\nu^{*}\right).$

\end{prop}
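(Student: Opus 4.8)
The plan is to observe that this proposition is nothing but the previous one applied to the dual hypergraph $\left(U,\varphi^{*}\right)$ rather than to $\left(U,\varphi\right)$ itself, and then to translate the resulting statements back through the duality correspondence. First I would record that $\left(U,\varphi^{*}\right)$ is itself a legitimate MuRelDef hypergraph: the proof given above for the MuRelDef dual already shows that the elements $x$ with $\left(x,x\right)\in\varphi^{*}$ are exactly the hyperedges of $\left(U,\varphi\right)$, and that whenever $\left(x,y\right)\in\varphi^{*}$ with $x\neq y$ one has $\left(x,x\right)\notin\varphi^{*}$ and $\left(y,y\right)\in\varphi^{*}$. This is precisely what is needed to check the two defining axioms for $\varphi^{*}$, so the earlier proposition may be invoked verbatim with $\varphi$ replaced by $\varphi^{*}$.

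Applying that proposition to $\left(U,\varphi^{*}\right)$ gives immediately that $\nu^{*}=\varphi^{*}\cap1'$ isolates the vertices of $\left(U,\varphi^{*}\right)$ and that $\eta^{*}=\varphi^{*}\cap\left(\lnot\nu^{*}\right)$ isolates the hyperedges' content of $\left(U,\varphi^{*}\right)$. It then remains only to read these two facts in terms of the original hypergraph. By the duality construction the vertices of $\left(U,\varphi^{*}\right)$ are exactly the hyperedges of $\mathcal{H}$, which yields at once the first claim that $\nu^{*}$ retrieves the set of hyperedges.

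For the second claim I would unfold the content explicitly. The hyperedges of $\left(U,\varphi^{*}\right)$ are the vertices $v$ of $\mathcal{H}$, and for $x\neq y$ a short computation from $\varphi^{*}=\lnot\varphi\cap\left(1'\cup\smallsmile\varphi\right)$ reduces $\left(x,y\right)\in\varphi^{*}$ to $\left(y,x\right)\in\varphi$ (the clause $\left(x,y\right)\notin\varphi$ being automatic, since $x$ is then a vertex of $\mathcal{H}$ and vertices relate only to themselves under $\varphi$). Hence the content of the dual hyperedge $v$ is $\left\{ e:\left(e,v\right)\in\varphi\right\} =H\left(v\right)$, the star of $v$, i.e.\ the hyperedges of $\mathcal{H}$ incident to $v$; this is exactly the set of incident hyperedges that $\eta^{*}$ is claimed to retrieve.

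The only genuinely delicate point is this last translation step: one must be careful that ``hyperedges' content of the dual'' is interpreted through $\iota^{*}\left(v\right)=H\left(v\right)$, so that the objects returned by $\eta^{*}$ are read as stars in $\mathcal{H}$ rather than as hyperedges of the dual in their own right. The verification that $\varphi^{*}$ satisfies the MuRelDef axioms and the symbolic reduction of $\left(x,y\right)\in\varphi^{*}$ are both routine once the duality proof above is in hand.
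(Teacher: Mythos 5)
Your proposal is correct and follows exactly the route the paper intends: its entire proof is the one-liner ``Immediate with duality,'' and what you have written is a faithful expansion of that, applying the preceding proposition to $\left(U,\varphi^{*}\right)$ and translating back via the correspondence vertices of the dual $\leftrightarrow$ hyperedges of $\mathcal{H}$. Your extra care in checking that $\varphi^{*}$ satisfies the two MuRelDef axioms, and in reading the dual hyperedge content as the star $H\left(v\right)$, fills in steps the paper leaves implicit but introduces nothing that departs from its argument.
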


\begin{proof}Immediate with duality.\end{proof}

\begin{prop}

Let $\left(U,\varphi\right)$ be a hypergraph and $\varphi^{*}$ its
dual relation.

(i) The binary relation $\alpha=\left(\varphi\circ\varphi^{*}\right)\cap\left(\lnot\eta\right)$
allows to retrieve the adjacency of vertices.

(ii) The binary relation $\beta=\left(\varphi^{*}\circ\varphi\right)\cap\left(\lnot\eta^{*}\right)$
allows to retrieve the intersection of hyperedges.

\end{prop}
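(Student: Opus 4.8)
The plan is to make the algebraic relations concrete by classifying their pairs according to the types (vertex or hyperedge) of their two endpoints, and then to evaluate the two compositions by a case analysis on the intermediate element. First I would record the pair-structure of $\varphi$: every pair of $\varphi$ is either a vertex self-loop $(v,v)$ with $v\in V$ or an incidence pair $(e,v)$ with $e\in E$ and $v\in\iota(e)$. Indeed, by the first axiom of the MuRelDef the right endpoint of any pair is always a vertex, and by the second axiom a vertex can only be the left endpoint of its own self-loop. From the earlier computation of the dual I would then extract the corresponding description of $\varphi^{*}$: unfolding $\varphi^{*}=\lnot\varphi\cap\left(1'\cup\smallsmile\varphi\right)$ shows that its diagonal pairs are exactly $(e,e)$ with $e\in E$, while for $x\neq y$ one has $(x,y)\in\varphi^{*}$ iff $(y,x)\in\varphi$, i.e.\ iff $x\in V$, $y\in E$ and $x\in\iota(y)$; in particular $\varphi^{*}$ off the diagonal is the reversed incidence. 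With $\nu=\varphi\cap1'$ and $\nu^{*}=\varphi^{*}\cap1'$ this gives at once $\eta=\{(e,v):e\in E,\,v\in\iota(e)\}$ and $\eta^{*}=\{(v,e):e\in E,\,v\in\iota(e)\}$.

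Next I would fix the composition convention used throughout (the intermediate variable being matched on the inner operand), so that $(x,z)\in R\circ S$ iff there is $y$ with $(x,y)\in S$ and $(y,z)\in R$. Under this convention a pair of $\varphi\circ\varphi^{*}$ arises from a path $x\xrightarrow{\varphi^{*}}y\xrightarrow{\varphi}z$, and I would split on the four combinations coming from the two shapes of a $\varphi^{*}$-pair and the two shapes of a $\varphi$-pair. Two of the combinations force $y$ to be simultaneously a vertex and a hyperedge and are therefore void. The combination ``$x=y\in E$, then $(y,z)\in\varphi$'' yields exactly the incidence pairs $(e,z)$ with $z\in\iota(e)$, that is $\eta$; the combination ``$x\in V$, $y\in E$, $x\in\iota(y)$, then $z\in\iota(y)$'' yields exactly the pairs $(u,v)$ of vertices lying in a common hyperedge. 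Hence $\varphi\circ\varphi^{*}=\eta\cup\mathrm{Adj}$, where $\mathrm{Adj}$ denotes the vertex-adjacency relation.

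To finish (i) I would intersect with $\lnot\eta$. Since every pair of $\mathrm{Adj}$ has a \emph{vertex} as its left endpoint whereas every pair of $\eta$ has a \emph{hyperedge} as its left endpoint, the two sets are disjoint; therefore $\alpha=\left(\eta\cup\mathrm{Adj}\right)\cap\lnot\eta=\mathrm{Adj}$, which is precisely the adjacency of vertices (the diagonal pair $(v,v)\in\alpha$ merely recording that $v$ belongs to at least one hyperedge). Part (ii) is the exact mirror image: reading $\varphi^{*}\circ\varphi$ along a path $x\xrightarrow{\varphi}y\xrightarrow{\varphi^{*}}z$ and running the same four-case analysis gives $\varphi^{*}\circ\varphi=\eta^{*}\cup\mathrm{Int}$, where $\mathrm{Int}$ is the set of pairs of hyperedges sharing a common vertex; intersecting with $\lnot\eta^{*}$ (whose pairs have vertex left endpoints, disjoint from the hyperedge left endpoints of $\mathrm{Int}$) leaves $\beta=\mathrm{Int}$, the intersection relation of hyperedges.

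I expect the only real obstacle to be bookkeeping rather than ideas: one must fix the composition convention so that it is consistent with the directions in which $\varphi$ (hyperedge-to-vertex) and $\varphi^{*}$ (vertex-to-hyperedge) point, and then verify that the subtraction by $\lnot\eta$ (resp.\ $\lnot\eta^{*}$) removes \emph{exactly} the spurious incidence block and nothing from the genuine adjacency (resp.\ intersection) block. The disjointness-of-source-types observation is what makes this subtraction clean, and the whole argument rests on the concrete pair-descriptions of $\varphi$ and $\varphi^{*}$ established in the first step.
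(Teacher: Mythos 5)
Your proposal is correct and follows essentially the same route as the paper: a case analysis on whether elements are vertices or hyperedges (equivalently, on the two ``shapes'' of pairs in $\varphi$ and $\varphi^{*}$), showing that $\varphi\circ\varphi^{*}$ decomposes into an incidence block equal to $\eta$ plus the genuine vertex-adjacency block, so that intersecting with $\lnot\eta$ leaves exactly the latter. Your four-case computation is in fact tighter than the paper's argument, since it establishes the full equality $\alpha=\mathrm{Adj}$ rather than only checking that adjacency pairs survive and incidence pairs are removed.
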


\begin{proof}[Proof for (i)]In $\left(U,\varphi^{*}\right),$ when
we select an element of $x\in U$, either it is a vertex of $\left(U,\varphi^{*}\right),$
and then $\left(x,x\right)\in\varphi^{*},$ or it is a hyperedge and
then $\left(x,x\right)\notin\varphi^{*}.$

Let $x\in\left(U,\varphi^{*}\right)$ be such that $\left(x,x\right)\notin\varphi^{*}.$
$x$ is a hyperedge of $\left(U,\varphi^{*}\right)$ and, therefore,
a vertex of $\left(U,\varphi\right).$ If there exists $e\in U,$
such that $\left(x,e\right)\in\varphi^{*},$ then $e$ is a vertex
of $\left(U,\varphi^{*}\right),$ also a hyperedge of $\left(U,\varphi\right).$
If this hyperedge is non empty in $\left(U,\varphi\right),$ there
exists $x'\in\left(U,\varphi\right),$ such that $\left(e,x'\right)\in\varphi,$
and, necessarily by definition, $\left(x',x'\right)\in\varphi.$ $\left(x,x'\right)$
links two vertices of $\left(U,\varphi\right).$ Either $x=x'$ which
implies: $\left(x,x'\right)\in\lnot\eta$ or $x\neq x'$ which implies
$\left(x,x'\right)\notin\varphi,$ so $\left(x,x'\right)\in\lnot\eta.$
Therefore, $\left(x,x'\right)\in\left(\varphi\circ\varphi^{*}\right)\cap\left(\lnot\eta\right)$
links two vertices and points the adjacency of these two vertices.

Let $x\in\left(U,\varphi^{*}\right)$ be such that $\left(x,x\right)\in\varphi^{*}.$
$x$ is a vertex of $\left(U,\varphi^{*}\right)$ and, therefore,
a hyperedge of $\left(U,\varphi\right),$ i.e.\ $\left(x,x\right)\notin\varphi$.
If this hyperedge is non empty in $\left(U,\varphi\right),$ there
exists $x'\in\left(U,\varphi\right),$ such that $\left(x,x'\right)\in\varphi,$
and necessarily, by definition, $\left(x',x'\right)\in\varphi,$ and
$x'$ is a vertex of $\left(U,\varphi\right).$ Therefore, $\left(x,x'\right)\in\left(\varphi\circ\varphi^{*}\right)\cap\eta.$\end{proof}

\begin{proof}[Proof for (ii)]This proof is similar to the former
one.\end{proof}

\section{Neighborhood of a vertex}

The neighborhood of a vertex in a hypergraph is defined similarly
to what is done for graphs. 

The \textbf{neighborhood of a vertex} $v\in V$\index{neighborhood!hypergraph}\index{hypergraph!vertex!neighborhood}\index{vertex!hypergraph!neighborhood}
is the set $\Gamma\left(v\right)$\nomenclature[A,0007]{$\Gamma\left(v\right)$}{Neighbourhood of a vertex $v \in V$}
of vertices that belongs to the hyperedges this vertex is belonging.

That is with:
\begin{itemize}
\item PoDef: $\Gamma(v)=\bigcup\limits _{e\in H(v)}\iota(e);$
\item MuRelDef: For $v$ such that $\left(v,v\right)\in\varphi:$
\[
\Gamma(v)=\left\{ s\in U:\,\left(s,s\right)\in\varphi\land\left(\exists w\in U:\left(w,s\right)\in\varphi\land\left(w,v\right)\in\varphi\land\left(w,w\right)\notin\varphi\right)\right\} ,
\]
which is equivalent to: 
\[
\Gamma(v)=\left\{ s\in U:\,\left(s,s\right)\in\varphi\land\left(s,v\right)\in\varphi\circ\varphi^{*}\right\} .
\]
\end{itemize}
\begin{proof}[Proof for (MuRelDef)]As $\left(v,v\right)\in\varphi,$
$\left(v,v\right)\notin\varphi^{*}.$ If there exists $t\in U$ such
that $(t,v)\in\varphi^{*},$ $t$ is a hyperedge of $\left(U,\varphi\right).$
Therefore, any $s$, if there exists, such that $\left(s,t\right)\in\varphi$
is a vertex and is in the same hyperedge than $v,$ so in its neighborhood.
The other inclusion is straightforward.\end{proof}

\section{Weighted hypergraph}

In \citet{zhou2007learning}, the definition of a weighted hypergraph
is given, based on the definition of \citet{berge1967graphes,berge1973graphs}
of a hypergraph.

$\mathcal{H}_{w_{e}}=\left(V,E,w_{e}\right)$ is a \textbf{weighted
hypergraph}\index{hypergraph!weighted}\index{weighted hypergraph}\nomenclature[A,0009.5]{$\mathcal{H}_{w}=\left(V,E,w\right)$}{Weighted hypergraph of vertex set $V$, edge family $E$ and weight function $w$},
if $\left(V,E\right)$ is a hypergraph and $w_{e}:E\rightarrow\mathbb{R}$
is a function that associates to each hyperedge $e\in E$ a weight
$w_{e}\left(e\right).$

We can refine this definition to handle weights on individual vertices,
by using a second function $w_{v}:V\rightarrow\mathbb{R}$ that associates
to each vertex $v\in V$ a weight $w_{v}\left(v\right).$ But putting
weights that are hyperedge dependent cannot be achieved with hypergraphs
as it would imply to move to a new algebra, as we will see with the
introduction of hb-graphs.

\section{Hypergraph features}

\label{subsec:Hypergraph-features}

Hypergraph features are very similar to those of graphs with some
arrangements to account for their differences in structure.

The \textbf{order} $o_{\mathcal{H}}$\index{hypergraph!order}\index{order!hypergraph}
\nomenclature[A,0010]{$o_\mathcal{H}$}{Order of a hypergraph $\mathcal{H}$}
of a hypergraph $\mathcal{H}$ is defined as $o_{\mathcal{H}}\overset{\Delta}{=}\left|V\right|.$

The \textbf{rank} $r_{\mathcal{H}}$\index{hypergraph!rank}\index{rank!hypergraph}\nomenclature[A,0011]{$r_\mathcal{H}$}{Rank of a hypergraph $\mathcal{H}$}
of a hypergraph $\mathcal{H}$ is the maximum of the cardinalities
of the hyperedges:
\[
r_{\mathcal{H}}\overset{\Delta}{=}\underset{e\in E}{\max}\left|e\right|,
\]
while the \textbf{anti-rank} $s_{\mathcal{H}}$\index{hypergraph!anti-rank}\index{anti-rank!hypergraph}\nomenclature[A,0012]{$s_\mathcal{H}$}{Anti-rank of a hypergraph $\mathcal{H}$}
corresponds to the minimum: 
\[
s_{\mathcal{H}}\overset{\Delta}{=}\underset{e\in E}{\min}\left|e\right|.
\]

The \textbf{degree of a vertex $v_{i}\in V$}\index{hypergraph!vertex!degree}\index{degree!vertex!hypergraph}\index{vertex!hypergraph!degree},
written \textbf{$\text{deg}\left(v_{i}\right)=d_{i},$} corresponds
to the number of hyperedges that this vertex participates in. Hence:
\[
\text{deg}\left(v_{i}\right)\overset{\Delta}{=}\left|H\left(v_{i}\right)\right|.
\]
It is also designated as \textbf{hyper-degree}\index{hypergraph!vertex!hyper-degree}\index{hyper-degree!vertex!hypergraph}\index{vertex!hypergraph!hyper-degree}
in some articles.

\section{Paths and related notions}

A \textbf{path}\index{path!hypergraph}\index{hypergraph!path} $v_{i_{0}}e_{j_{1}}v_{i_{1}}\ldots e_{j_{s}}v_{i_{s}}$
in a hypergraph $\mathcal{H}$ from a vertex $u$ to a vertex $v$
is a vertex / hyperedge alternation with $s$ hyperedges $e_{j_{k}}$
such that: $\forall k\in\left\llbracket s\right\rrbracket ,j_{k}\in\left\llbracket p\right\rrbracket $
and $s+1$ vertices $v_{i_{k}}$ with $\forall k\in\left\{ 0\right\} \cup\left\llbracket s\right\rrbracket ,\,i_{k}\in\left\llbracket n\right\rrbracket $
and such that $v_{i_{0}}=u$, $v_{i_{s}}=v$, $u\in e_{j_{1}}$ and
$v\in e_{j_{s}}$ and that for all $k\in\left\llbracket s-1\right\rrbracket $,
$v_{i_{k}}\in e_{j_{k}}\cap e_{j_{k+1}}.$ 

The \textbf{length of a path}\index{hypergraph!path!length}\index{path!length}\index{length!path}
from $u$ to $v$ is the number of hyperedges it traverses; given
a path $\mathscr{P}$, we write $l\left(\mathscr{P}\right)$\nomenclature[A,008]{$l\left(\mathscr{P}\right)$}{Length of a path $\mathscr{P}$}
its length. It holds that if $\mathscr{P}=v_{i_{0}}e_{j_{1}}v_{i_{1}}\ldots e_{j_{s}}v_{i_{s}},$
we have: $l\left(\mathscr{P}\right)=s.$

The \textbf{hypergraph distance}\index{hypergraph!hypergraph distance}\index{hypergraph distance!hypergraph}
$d\left(u,v\right)$ \nomenclature[A,009]{$d\left(u,v\right)$}{Distance between two vertices $u$ and $v$}
between two vertices $u$ and $v$ of a hypergraph is the length of
the shortest path between $u$ and $v$, if there exists, that can
be found in the hypergraph. In the case where there is no path between
the two vertices, they are said to be \textbf{disconnected}, and we
set: $d\left(u,v\right)=+\infty.$ A hypergraph is said to be \textbf{connected}\index{hypergraph!connected}\index{connected hypergraph}
if there exists a path between every pair of vertices of the hypergraph,
and \textbf{disconnected} otherwise.

A \textbf{connected component}\index{hypergraph!connected component}\index{component!hypergraph!connected}\index{connected component}
of a hypergraph is a maximal subset of the vertex set for which there
exists a path between any two vertices of this maximal subset in the
hypergraph.

\section{Multi-graph, graph, 2-section}

A hypergraph with rank at most 2 is called a \textbf{multi-graph}\index{multi-graph}.
A simple multi-graph without loop is a \textbf{graph}\index{graph}.

For the moment, we keep the original concept of adjacency as it is
implicitly given in \citet{bretto2013hypergraph}; we mention it here
as \textbf{2-adjacency} since it is a pairwise adjacency.

Two distinct vertices $v_{i_{1}}\in V$ and $v_{i_{2}}\in V$ are
said \textbf{2-adjacent}\index{2-adjacent} if there exists $e\in E$
such that $v_{i_{1}}\in e$ and $v_{i_{2}}\in e.$

The graph $\left[\mathcal{H}\right]_{2}\overset{\Delta}{=}\left(V_{\left[2\right]},E_{\left[2\right]}\right)$\nomenclature[A,0012.2]{$\left[\mathcal{H}\right]_{2}$}{2-section of the hypergraph $\mathcal{H}$}
obtained from a hypergraph $\mathcal{H}=\left(V,E\right)$ by considering:
$V_{\left[2\right]}\overset{\Delta}{=}V$ and such that if $v_{i_{1}}$
and $v_{i_{2}}$ are 2-adjacent in $\mathcal{H}$, $\left\{ v_{i_{1}},v_{i_{2}}\right\} \in E_{[2]}$
is called the \textbf{2-section of the hypergraph} $\mathcal{H}.$\index{hypergraph!2-section}\index{2-section of a hypergraph}

The graph $\left[\mathcal{H}\right]_{\mathcal{I}}\overset{\Delta}{=}\left(V_{\mathcal{I}},E_{\mathcal{I}}\right)$\nomenclature[A,0012.5]{$\left[\mathcal{H}\right]_{\mathcal{I}}$}{Interection graph of the hypergraph $\mathcal{H}$}
obtained from a hypergraph $\mathcal{H}=\left(V,E\right)$ by considering:
$V_{\mathcal{I}}\overset{\Delta}{=}V^{*}$ and, such that, if $e_{j_{1}}\in E$
and $e_{j_{2}}\in E$---with $j_{1}\neq j_{2}$---are intersecting
hyperedges in $\mathcal{H},$ then $\left\{ e_{j_{1}},e_{j_{2}}\right\} \in E_{\mathcal{I}},$
is called the \textbf{intersection graph} of the hypergraph $\mathcal{H}.$
\index{hypergraph!intersection graph}\index{intersection graph!hypergraph}

Let $k\in\mathbb{N}^{*}.$ A hypergraph is said to be \textbf{$\boldsymbol{k}$-uniform}\index{hypergraph!k-uniform@$k\textrm{-uniform}$}\index{k-uniform@$k\textrm{-uniform}$!hypergraph}
if all its hyperedges have the same cardinality $k.$

A \textbf{directed hypergraph} $\mathcal{H}=\left(V,E\right)$\index{hypergraph!directed}\index{directed!hypergraph}
on a finite set of $n$ vertices (or vertices) $V=\left\{ v_{i}:i\in\left\llbracket n\right\rrbracket \right\} $
is defined as a family of $p$ \textbf{hyperedges} $E=\left(e_{j}\right)_{j\in\left\llbracket p\right\rrbracket }$
where each hyperedge $e_{j}$ contains exactly two non-empty subsets
of $V$, one which is called the \textbf{source}---written $e_{s\,j}$---and
the other one which is the \textbf{target}---written $e_{t\,j}.$
A hypergraph that is not directed is said to be an \textbf{undirected
hypergraph}.

\section{Sum of hypergraphs}

Let $\mathcal{H}_{1}=\left(V_{1},E_{1}\right)$ and $\mathcal{H}_{2}=\left(V_{2},E_{2}\right)$
be two hypergraphs. The \textbf{sum} of\index{sum!hypergraph}\index{hypergraph!sum}
these two hypergraphs is the hypergraph written $\mathcal{H}_{1}+\mathcal{H}_{2}$
defined as: 
\[
\mathcal{H}_{1}+\mathcal{H}_{2}\overset{\Delta}{=}\left(V_{1}\cup V_{2},E_{1}\cup E_{2}\right).
\]
\nomenclature[A,0012.7]{$\mathcal{H}_1+\mathcal{H}_2$}{Sum of two hypergraphs $\mathcal{H}_1$ and $\mathcal{H}_2$}This
sum is said \textbf{direct} if $E_{1}\cap E_{2}=\emptyset.$ In this
case, the sum is written $\mathcal{H}_{1}\oplus\mathcal{H}_{2}$\nomenclature[A,0012.8]{$\mathcal{H}_1\oplus\mathcal{H}_2$}{Direct sum of two hypergraphs $\mathcal{H}_1$ and $\mathcal{H}_2$}\index{direct sum!hypergraph}\index{hypergraph!direct sum}.

\section{Matrix definitions of hypergraphs}

\subsubsection{Incidence matrix}

The \textbf{incidence matrix}\index{hypergraph!incidence matrix}\index{incidence matrix!hypergraph}\nomenclature[A,0013]{$H$}{Incidence matrix of a hypergraph}:
\[
H\overset{\Delta}{=}\left[h_{ij}\right]_{\substack{i\in\left\llbracket n\right\rrbracket \\
j\in\left\llbracket p\right\rrbracket 
}
}
\]
of a hypergraph is the matrix having rows indexed by the corresponding
indices of vertices of $\mathcal{H}$ and columns by the hyperedge
indices, and where the coefficient $h_{ij}\overset{\Delta}{=}1$ when
$v_{i}\in e_{j},$ and $h_{ij}\overset{\Delta}{=}0$ when $v_{i}\notin e_{j}.$

\subsubsection{Adjacency matrix}

\label{subsec:Adjacency-matrix}

We focus in this paragraph on the pairwise adjacency as defined in
\citet{berge1967graphes,berge1973graphs} and \citet{bretto2013hypergraph}.
In the latter, the \textbf{adjacency matrix of a hypergraph} $\mathcal{H}$\index{hypergraph!adjacency matrix}\index{adjacency matrix!unweighted hypergraph}
is defined as the square matrix $A\overset{\Delta}{=}\left[a_{ij}\right]$\nomenclature[A,0014]{$A$}{Adjacency matrix of a hypergraph}
having rows and columns indexed by indices corresponding to the indices
of vertices of $\mathcal{H}$ and where for all $i,j\in\left\llbracket p\right\rrbracket ,$
$i\neq j:$ 
\[
a_{ij}\overset{\Delta}{=}\left|\left\{ e\in E\,:\,v_{i}\in e\land v_{j}\in e\right\} \right|
\]
 and for all $i\in\left\llbracket p\right\rrbracket :$ 
\[
a_{ii}\overset{\Delta}{=}0.
\]

It holds, following \citet{estrada2005complex}:
\[
A=HH^{\intercal}-D_{V}
\]
where $D_{V}\overset{\Delta}{=}\text{diag}\left(\left(d_{i}\right)_{i\in\left\llbracket n\right\rrbracket }\right)$
is the diagonal matrix containing vertex degrees.

The \textbf{adjacency matrix of a weighted hypergraph}\index{hypergraph!adjacency matrix of a weighted}\index{adjacency matrix!weighted hypergraph}
$\mathcal{H}_{w}$ is defined in \citet{zhou2007learning} as the
matrix $A_{w}$\nomenclature[A,0015]{$A_w$}{Adjacency matrix of a weighted hypergraph}
of size $n\times n$ defined as:
\[
A_{w}\overset{\Delta}{=}HWH^{\intercal}-D_{w}
\]
where:
\[
W\overset{\Delta}{=}\text{diag}\left(\left(w_{j}\right)_{j\in\left\llbracket p\right\rrbracket }\right)
\]
is a diagonal matrix of size $p\times p$ containing the weights $\left(w_{j}\right)_{j\in\left\llbracket p\right\rrbracket }$
of the respective hyperedges $\left(e_{j}\right)_{j\in\left\llbracket p\right\rrbracket }$
and $D_{w}$ is the diagonal matrix of size $n\times n$:
\[
D_{w}\overset{\Delta}{=}\text{diag}\left(\left(d_{w}\left(v_{i}\right)\right)_{i\in\left\llbracket n\right\rrbracket }\right)
\]
and where for all $i\in\left\llbracket n\right\rrbracket $: 
\[
d_{w}\left(v_{i}\right)\overset{\Delta}{=}\sum\limits _{j\in\left\{ k:k\in\left\llbracket p\right\rrbracket \land v_{i}\in e_{k}\right\} }w_{j}
\]
is the \textbf{weighted degree} of the vertex $v_{i}\in V.$

We will refine the concept of adjacency in general hypergraphs in
the next Chapter.

\subsubsection{Additional features}

In \citet{estrada2005complex}, the authors introduce particular features
to characterize hypergraphs. They evaluate the \textbf{centrality
of a vertex}\index{vertex!hypergraph!centrality}\index{centrality!vertex}\index{vertex!hypergraph!centrality}
in a simple unweighted hypergraph, by orthogonalizing the adjacency
matrix in:
\[
A=U\Lambda U^{\intercal},
\]
where $U\overset{\Delta}{=}\left(u_{ij}\right)_{\left(i,j\right)\in\left\llbracket n\right\rrbracket ^{2}}$
and $\Lambda\overset{\Delta}{=}\text{diag}\left(\left(\lambda_{i}\right)_{i\in\left\llbracket n\right\rrbracket }\right)$
is the diagonal matrix formed of the eigenvalues $\lambda_{i}$ of
$A$ with $i\in\left\llbracket n\right\rrbracket .$

The \textbf{sub-hypergraph centrality}\index{sub-hypergraph centrality}\index{centrality!sub-hypergraph}\index{hypergraph!sub-hypergraph!centrality}
$C_{\text{SH}}(i)$ is defined as the sum of the closed walks of different
lengths in the network, starting and ending at a given vertex. For
a simple hypergraph, it can be calculated using: 
\[
C_{\text{SH}}(i)\overset{\Delta}{=}\sum\limits _{j=1}^{n}\left(u_{ij}\right)^{2}\mbox{e}^{\lambda_{j}}.
\]
A \textbf{clustering coefficient}\index{hypergraph!clustering coefficient}\index{clustering coefficient!hypergraph}
for hypergraph is also defined as: 
\[
C(\mathcal{H})\overset{\Delta}{=}\dfrac{6\times\mbox{number of hyper-triangles}}{\mbox{number of 2-paths}}
\]
where a hyper-triangle is defined as a sequence of three different
vertices that are separated by three different hyperedges $v_{i}e_{p}v_{j}e_{q}v_{k}e_{r}v_{i}$
and a 2-path is a sequence $v_{i}e_{p}v_{j}e_{q}v_{k}.$

\section{Hypergraph visualisation}

\label{subsec:Hypergraph-visualisation}

In \citet{makinen1990how}, hypergraph visualizations are classified
in two categories called the ``subset standard'' and the ``edge
standard''. These two types of representations reflect the two facets
of hypergraphs. The \textbf{subset standard}\index{hypergraph!representation!subset standard}\index{subset standard!hypergraph}\index{representation!hypergraph!subset standard}
reflects that hyperedges are subsets of the vertex set: the vertices
of a hyperedge are drawn as points and hyperedges as closed envelopes.
The \textbf{edge standard}\index{hypergraph!representation!edge standard}\index{edge standard!hypergraph}\index{representation!hypergraph!edge standard}
reflects that vertices of a hyperedge maintain together a multi-adic
relationship. We then cover the Zykov representation which can be
seen as a bridge between the two representations.

\subsubsection{The subset standard}

\textbf{Euler diagram}\index{Euler diagram} represents sets by simple
closed shapes in a two dimensional plane. Shapes can overlap to represent
relationships within the sets; these relationships can be either fully
inclusive, partially inclusive or disjoint.

In \citet{venn1881diagrammatic}, the author introduces a particular
case of Euler diagram that shows the $2^{n}$ logical relations between
$n$ sets with the aim to formalize logical propositions by diagrams,
now called \textbf{Venn diagrams}\index{Venn diagram}.

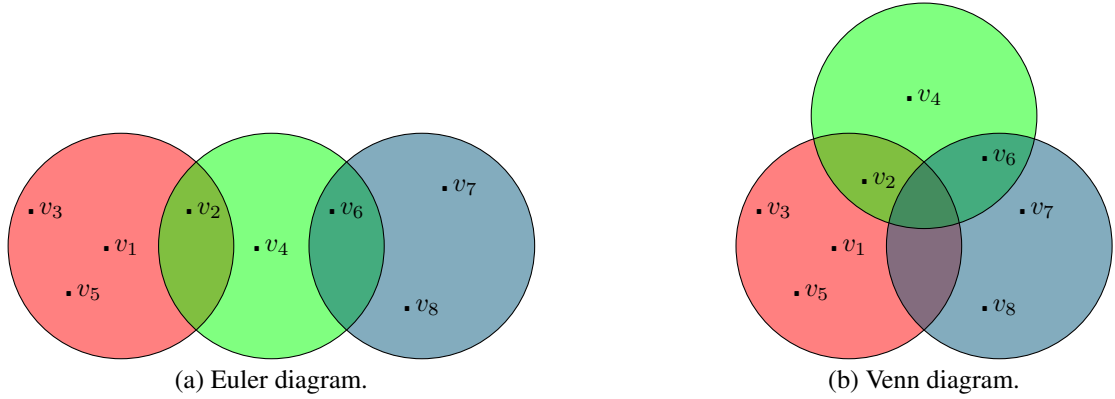
\begin{figure}
\begin{center}

\begin{tabular}{>{\centering}p{0.5\textwidth}>{\centering}p{0.5\textwidth}}
\def\firstcircle{(0,0) circle (1.5cm)} 
\def\secondcircle{(0:2cm) circle (1.5cm)} 
\def\thirdcircle{(0:4cm) circle (1.5cm)}
\begin{tikzpicture}     

\begin{scope}[fill opacity=0.5]         
\fill[red] \firstcircle;         
\fill[green] \secondcircle;         
\fill[blue] \thirdcircle;         
\draw \firstcircle;         
\draw \secondcircle;         
\draw \thirdcircle;
\end{scope}
\node at (0,0) {$\centerdot\,v_1$};
\node at (-1,0.5) {$\centerdot\,v_3$};
\node at (-0.5,-0.6) {$\centerdot\,v_5$};
\node at (1.1,0.5) {$\centerdot\,v_2$};
\node at (2,0) {$\centerdot\,v_4$};
\node at (3,0.5) {$\centerdot\,v_6$};
\node at (4.5,.8) {$\centerdot\,v_7$};
\node at (4,-0.8) {$\centerdot\,v_8$};

\end{tikzpicture} & \def\firstcircle{(0,0) circle (1.5cm)} 
\def\secondcircle{(60:2cm) circle (1.5cm)} 
\def\thirdcircle{(0:2cm) circle (1.5cm)}
\begin{tikzpicture}     

\begin{scope}[fill opacity=0.5]         
\fill[red] \firstcircle;         
\fill[green] \secondcircle;         
\fill[blue] \thirdcircle;         
\draw \firstcircle;         
\draw \secondcircle;         
\draw \thirdcircle;
\end{scope}
\node at (0,0) {$\centerdot\,v_1$};
\node at (-1,0.5) {$\centerdot\,v_3$};
\node at (-0.5,-0.6) {$\centerdot\,v_5$};
\node at (0.4,0.9) {$\centerdot\,v_2$};
\node at (1,2) {$\centerdot\,v_4$};
\node at (2,1.2) {$\centerdot\,v_6$};
\node at (2.5,0.5) {$\centerdot\,v_7$};
\node at (2,-0.8) {$\centerdot\,v_8$};

\end{tikzpicture} \tabularnewline
(a) Euler diagram. & (b) Venn diagram.\tabularnewline
\end{tabular}

\end{center}

\caption{Difference between (a) Euler diagram and (b) Venn diagram.}
\end{figure}

In \citet{makinen1990how}, the author presents the Venn diagram representation
of hypergraphs: a Venn representation of a hypergraph represents hyperedges
as closed curves that contains points symbolizing the vertices. In
\citet{johnson23hypergraph}, the authors present two representations
with the subset standard. The first representation is a hyperedge-based
Venn diagram based on the representation of the hypergraph as a planar
graph, an embedding of this planar graph and a one-to-one map between
the hyperedges and the embedding faces, such that taking any subset
of the vertices, the union of the faces corresponding to this subset
includes a region of the plane with a connected interior. Hypergraphs
that can be represented in such a way are said hyperedge-planar. The
second representation is a vertex-based Venn diagram where the mapping
is from the set of vertices to the embedding faces such that for every
hyperedge the interior of the face union that contains all the hyperedge
vertices is connected, with a corresponding notion of vertex-planar
when such a representation is feasible.

In \citet{bertault2001drawing}, a drawing system is presented, named
PATATE, where hypergraphs are represented using the subset standard:
a graph is built combining three possibilities. The first representation
is achieved by building the incidence graph of the hypergraph, adding
a dummy vertex for each hyperedge that is linked by an edge, placed
at the barycenter. The second representation is built using a minimal
Euclidean spanning-tree that covers the vertices of each hyperedge:
the edges of the spanning-tree are added to the graph. The third representation
is built using a minimal Steiner tree: a vertex is added for every
Steiner point and the edges of the tree are added. A force directed
algorithm is applied to the built graph and the convex hull of each
hyperedge is then computed.

The Venn diagram representation is also addressed in \citet{estrada2005complex}.
The Venn diagram is relevant for small hypergraphs but is harder to
use for large hypergraphs.

A large survey on set representations is available in \citet{alsallakh2016stateoftheart}.

\subsubsection{The edge standard}

The edge standard includes two main representations: the \textbf{clique
expansion} \index{hypergraph!representation!clique expansion}\index{clique expansion!hypergraph}\index{representation!hypergraph!clique expansion}and
the \textbf{extra-node representation}\index{hypergraph!representation!extra node}\index{extra node!hypergraph}\index{representation!hypergraph!extra node}
also called the incidence representation in \citet{kaufmann2009subdivision}.

The clique expansion is based on the 2-section of the hypergraph where
a hyperedge of cardinality $n$ is represented by $\dfrac{n\left(n-1\right)}{2}$
undirected edges.

The extra vertex representation corresponds to the incidence graph
of the hypergraph, which is a bipartite graph. It is called the König
representation of the hypergraph in \citet{zykov1974hypergraphs}.
In this case, a hyperedge of size $n$ is represented with only $n$
edges connecting hyperedge vertices to a central vertex called \textbf{extra-node}.

Figure \ref{Fig : Clique view vs Extra node view of an Hyperedge}
illustrates these two representations. 

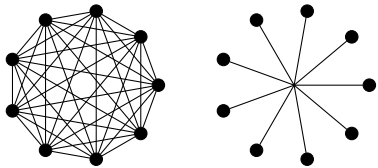
\begin{figure}[H]
\begin{center}%
\begin{tabular}{cc}
\newcount\tempcount
\def \n {9}
\def \mn {8}
\def \radius {1cm}
\begin{tikzpicture}[transform shape, every node/.style = {scale = 0.5}]
\foreach \s in {1,...,\n} {   
     \node[draw, circle, fill=black] (N-\s) at ({360/\n * (\s - 1)}:\radius) {};
}    
\foreach \ni in {1,...,\mn}{
   \tempcount=\ni
   \advance\tempcount by 1
   \foreach \nj in {\the\tempcount,...,\n}{
      \path (N-\ni) edge (N-\nj);
   }
}
\end{tikzpicture} & \newcount\tempcount
\def \n {9}
\def \mn {8}
\def \radius {1cm}
\begin{tikzpicture}[transform shape, every node/.style = {scale = 0.5}]
\node[draw,circle,fill=white,scale=0.1] (N-0) at (0,0) {};
\foreach \s in {1,...,\n} {   
     \node[draw, circle, fill=black] (N-\s) at ({360/\n * (\s - 1)}:\radius) {};
}    
\foreach \ni in {1,...,\n}{
   \path (N-0) edge (N-\ni);
}
\end{tikzpicture}\tabularnewline
\end{tabular}\end{center}

\caption{Clique expansion vs extra-node representation of a hyperedge.}
\label{Fig : Clique view vs Extra node view of an Hyperedge}
\end{figure}

The clique expansion generates graph with more edges than the extra-node
representation for values of $n\geqslant4$. Moreover, if a hyperedge
$e_{j_{1}}$ is a strict subset of an other hyperedge $e_{j_{2}},$
the clique expansion of $e_{j_{1}}$ does not add any other edge,
while the extra-node representation brings $\left|e_{j_{1}}\right|$
new edges and an additional vertex. An example of an unfavorable case
is given in Figure \ref{Fig : Extra node unfavorable cases}.

\begin{figure}[H]
\resizebox{\columnwidth}{!}{%
\begin{tabular}{|c|c|}
\hline 
Clique view & extra-node view\tabularnewline
\hline 
\newcount\tempcount
\def \n {5}
\def \mn {4}
\def \radius {1cm}
\begin{tikzpicture}[transform shape, every node/.style = {scale = 0.5}]
\foreach \s in {1,...,\n} {   
     \node[draw, circle, fill=black] (N-\s) at ({360/\n * (\s - 1)}:\radius) {};
}    
\foreach \ni in {1,...,\mn}{
   \tempcount=\ni
   \advance\tempcount by 1
   \foreach \nj in {\the\tempcount,...,\n}{
      \path (N-\ni) edge (N-\nj);
   }
}
\end{tikzpicture} & \newcount\tempcount
\def \n {5}
\def \mn {4}
\def \radius {1cm}
\begin{tikzpicture}[transform shape, every node/.style = {scale = 0.5}]
\node[draw,circle,fill=white,scale=0.1,color=red] (N-0) at (0,0) {};
\foreach \s in {1,...,\n} {   
     \node[draw, circle, fill=black] (N-\s) at ({360/\n * (\s - 1)}:\radius) {};
}    
\foreach \ni in {1,...,\n}{
   \path[color=red] (N-0) edge (N-\ni);
}
\node[draw,circle,fill=white,scale=0.1,color=green] (N-6) at (0.5,0.25) {};
\foreach \ni in {1,...,3}{
   \path[color=green] (N-6) edge (N-\ni);
}
\node[draw,circle,fill=white,scale=0.1,color=blue] (N-7) at (-0.5,-0.25) {};
\foreach \ni in {3,...,5}{
   \path[color=blue] (N-7) edge (N-\ni);
}
\end{tikzpicture}\tabularnewline
\hline 
In this case: 10 edges, 5 vertices & In this case: 11 edges, 5 vertices and 3 extra-nodes\tabularnewline
\hline 
\end{tabular}

}

\caption{Unfavorable case for the extra-node view.}
\label{Fig : Extra node unfavorable cases}
\end{figure}

In the clique representation, vertices of hyperedges are seen as interacting
with 2-adic relationships and the information on the meso-structure
is lost \citet{taramasco2010academic}. The extra-node representation
allows to keep the multi-adic relationships. Moreover, we have shown
in \citet{ouvrard2017networks} that the number of edges can be substantially
reduced by the extra-node representation for a power-law hypergraph.

\citet{junghans2008visualization} focuses on hyperedge drawing so
that they do not intersect cluster groups, giving a solution based
on force attraction/repulsion.

Other representations in the node standard exist: in \citet{paquette2011hypergraph},
the authors propose a representation using a pie-chart node approach,
which is valuable for hypergraphs where the hyperedges do not intersect
one another with too high cardinality. In \citet{kerren2013novel},
a radial approach is presented that fits for small hypergraphs. In
\citet{kaufmann2009subdivision}, a Steiner tree representation of
the hyperedges is considered as the edge representation of a hypergraph.

\subsubsection{The Zykov representation}

In \citet{zykov1974hypergraphs}, the author tackles the representation
of hypergraphs and proposes a notion of hypergraph planarity. A planar
hypergraph is a hypergraph where the vertices are represented as points
in a plane and hyperedges as closed curves not intersecting one another
but in the neighborhood of a common incident vertex of hyperedges. 

This notion of Zykov-planarity corresponds to the planarity of the
incidence graph of the hypergraph.

The author proposes a representation, called now the Zykov representation\index{hypergraph!representation!Zykov}\index{representation!hypergraph!Zykov}\index{Zykov representation}
and shown in Figure \ref{Fig: Zykov deformation} (c), that is based
on a continuous deformation of the edges, starting with the Euler
diagram in Figure \ref{Fig: Zykov deformation} (a) via the PaintSplash
representation in Figure \ref{Fig: Zykov deformation} (b) and that
stops just before obtaining the incidence graph as shown in Figure
\ref{Fig: Zykov deformation} (d). Hyperedges are represented as faces
of a subdivision realized by the vertices belonging to the corresponding
hyperedge. Two hyperedges with a common vertex are incident at that
vertex. The representation in the original article is in black and
white; hence, this representation requires two colors: one for the
background and an other one for the faces---we keep here the colors
for helping to the understanding of the continuous deformation of
the hyperedges from the Euler representation to the extra-node representation.
This representation is intensively used with simplicial complexes
which are particular case of hypergraphs.

The Zykov representation cannot be totally put neither in the edge
standard nor in the subset standard.

\begin{figure}
\begin{center}\resizebox{\textwidth}{!}{
\begin{tikzpicture} 
\node[inner sep=0pt] (fig1) at (0,0){
	\begin{tabular}{c}
\rule{0pt}{150pt}
         \includegraphics[width=.45\textwidth]{./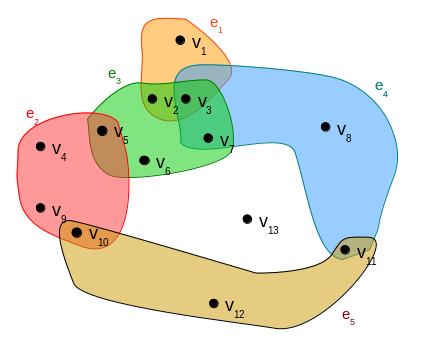}\\
		(a) Euler diagram.
	\end{tabular}
}; 
\node[inner sep=0pt] (fig2) at (8,0)     {
	\begin{tabular}{c}
\rule{0pt}{150pt}
\includegraphics[width=.45\textwidth]{./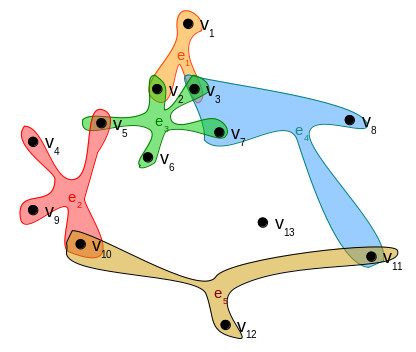}\\
(b) PaintSplash representation.
\end{tabular}
}; 
\node[inner sep=0pt] (fig3) at (0,-7)     {
	\begin{tabular}{c}
\rule{0pt}{150pt}
\includegraphics[width=.45\textwidth]{./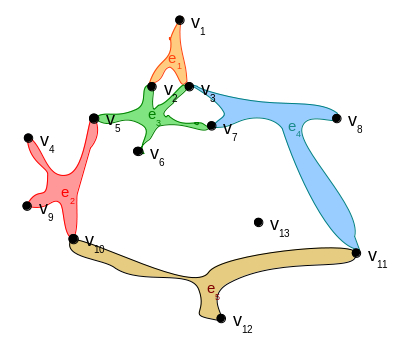}\\
(c) Zykov representation.
	\end{tabular}
}; 
\node[inner sep=0pt] (fig4) at (8,-7)     {
	\begin{tabular}{c}
\rule{0pt}{150pt}	\includegraphics[width=.45\textwidth]{./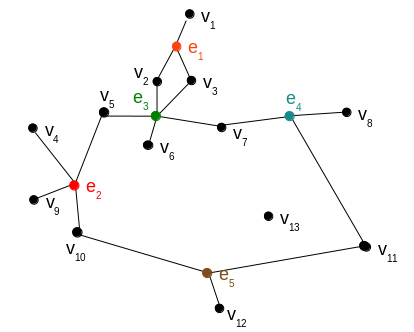}\\
(d) Extra-node representation.
	\end{tabular}
}; 
\draw[->,thick] (fig1.east) -- (fig2.west);
\draw[->,thick] (fig2.south west) -- (fig3.north east);
\draw[->,thick] (fig3.east) -- (fig4.west);
\end{tikzpicture}
}\end{center}

\caption{Moving from the Euler diagram (a) via the PaintSplash representation
(b) and the Zykov representation (c) to the incident representation
(d) based on \citet{zykov1974hypergraphs}---in the order of the
arrows. The original figure is in black and white.}

\label{Fig: Zykov deformation}
\end{figure}

\subsubsection{PaintSplash representation}

It is worth mentioning that Figure \ref{Fig: Zykov deformation} (b)
can lead to a good compromise between the subset standard and the
edge standard, that we have called in its modern colored version the
PaintSplash representation of a hypergraph\index{hypergraph!representation!PaintSplash}\index{representation!hypergraph!PaintSplash}\index{PaintSplash representation},
presented in Figure \ref{Fig: PaintSplash}.

\begin{figure}
\begin{center}\includegraphics[scale=0.35]{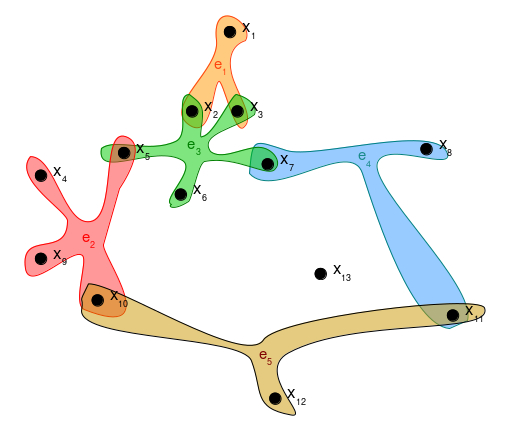}\end{center}

\caption{PaintSplash representation of a hypergraph.}

\label{Fig: PaintSplash}
\end{figure}

\subsubsection{Large hypergraph visualisation}

\begin{figure}[p]
\begin{center}\includegraphics[scale=0.4]{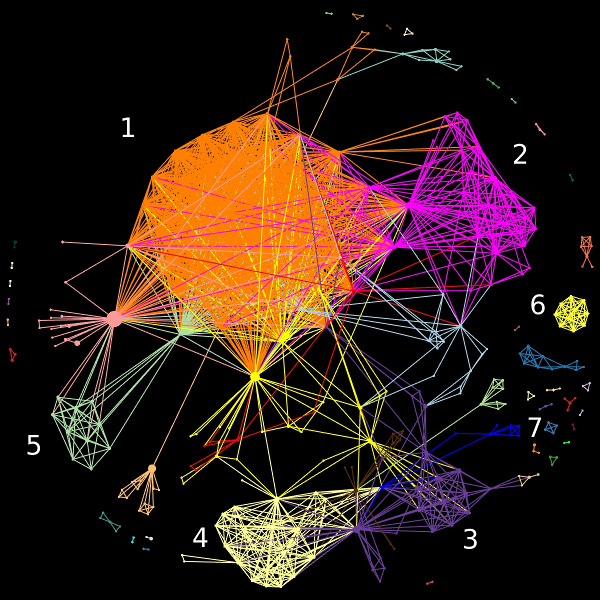}\end{center}

Sub-figure \ref{Fig exp : title:((bgo AND cryst*) OR (bgo AND calor*)) abstract:((bgo AND cryst*) OR (bgo AND calor*)) : organisations eng}
(a): Clique representation: The coordinates of the nodes are calculated
by ForceAtlas2 on the extra-node view and then transferred to this
view.\newline

\begin{center}\includegraphics[scale=0.8]{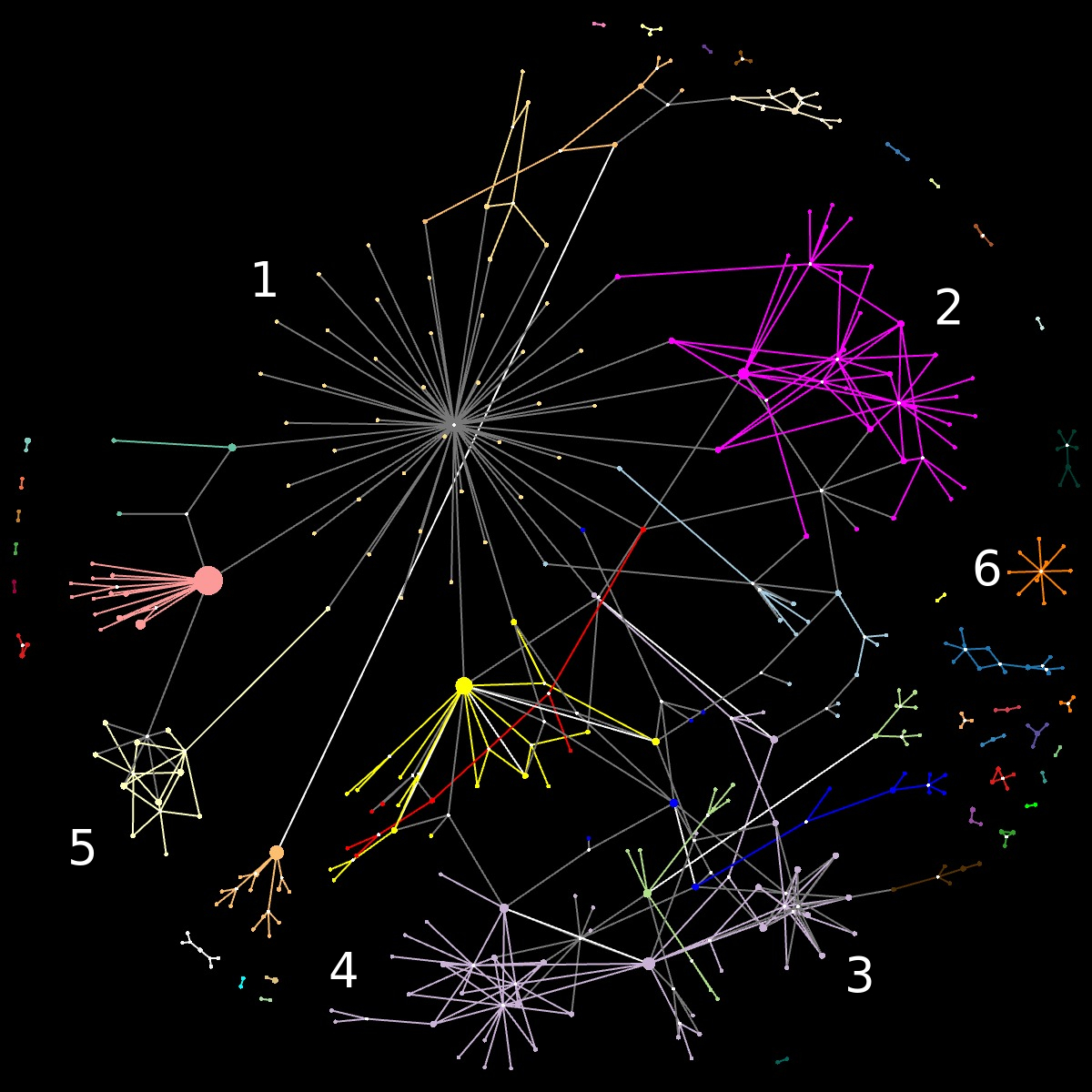}\end{center}

Sub-figure \ref{Fig exp : title:((bgo AND cryst*) OR (bgo AND calor*)) abstract:((bgo AND cryst*) OR (bgo AND calor*)) : organisations eng}
(b): Extra-node representation: The coordinates of the nodes are calculated
by ForceAtlas2 for this representation.

\caption{Hypergraph of organizations: Sub-figures (a) and (b) refer to the
search:\protect \\
title:((bgo AND cryst{*}) OR (bgo AND calor{*})) abstract:((bgo AND
cryst{*}) OR (bgo AND calor{*})) from \citet{ouvrard2017networks}.}
\label{Fig exp : title:((bgo AND cryst*) OR (bgo AND calor*)) abstract:((bgo AND cryst*) OR (bgo AND calor*)) : organisations eng}
\end{figure}

\begin{figure}[p]
\begin{center}\includegraphics[scale=1.75]{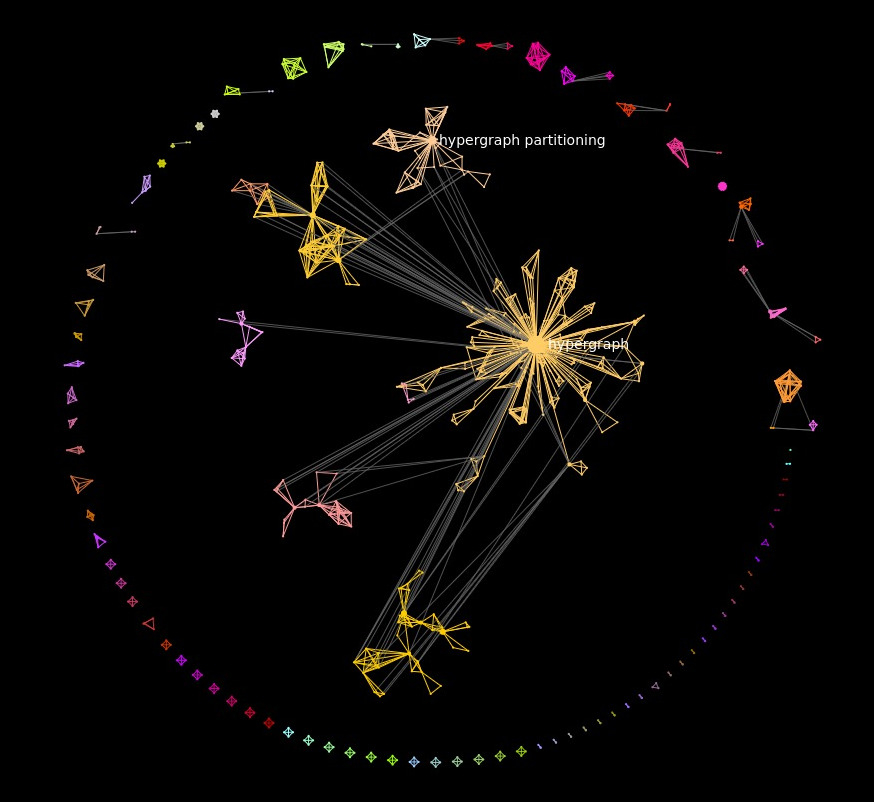}\end{center}

{\scriptsize{}\caption{Keyword collaborations from search: ``TITLE: hypergraph''.}
}{\scriptsize\par}

\begin{center}\#publications = 200, \#nodes = 707, \#edges = 1655,
\#clusters = 102, \#isolated clusters = 67.\end{center}\label{Fig: Large hypergraph example 1}
\end{figure}

\begin{figure}[p]
\begin{center}\includegraphics[scale=1.75]{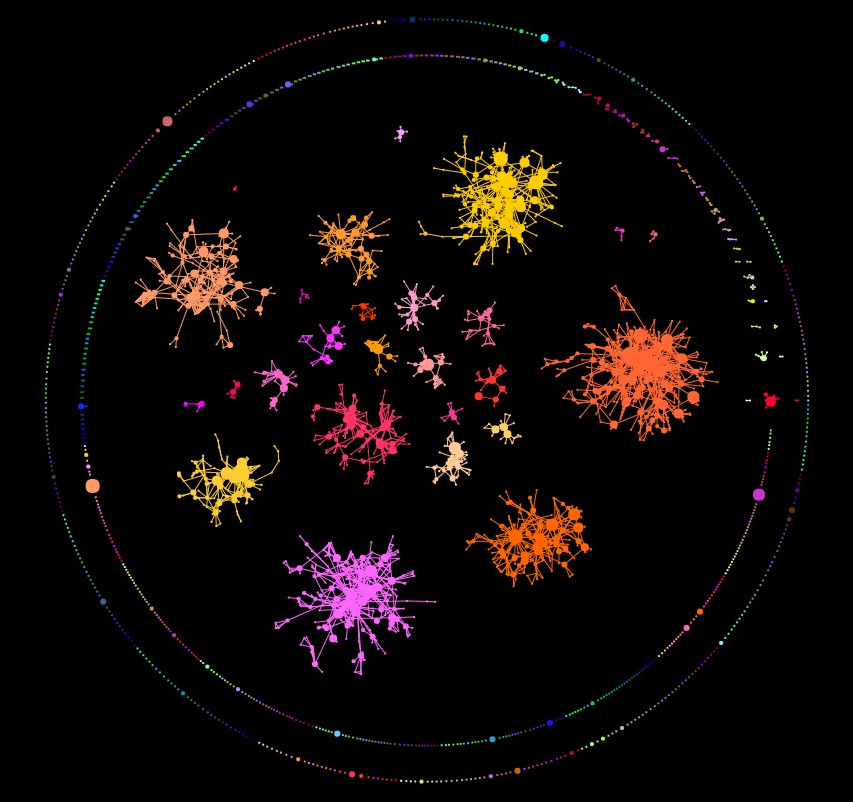}\end{center}

\caption{Organization collaborations from search: ``TITLE:graph''.}

\begin{center}\#publications = 3969, \#patents=893

\#nodes = 2932, \#edges = 4731, \#clusters = 951, \#isolated clusters
= 914.\end{center}\label{Fig: Large hypergraph example 2}
\end{figure}

\begin{figure}
\begin{center}\hspace*{0.2cm}
\begin{tikzpicture}[->,>=stealth',scale=0.53, every node/.append style={transform shape}]


\node[state=blue,
		yshift=-1cm,
		minimum width=25cm,
		minimum height=16.5cm,
		anchor=center] (H) {
		};
\node[] (H_Title) at ([yshift=-1.5em,xshift=-11cm]H.north) {\textbf{Hypergraph}};
\node[state=green!50!black,minimum width=4cm] (H_CC) at ([yshift=-3em,xshift=1cm]H_Title.north) {\begin{tabular}{c}
				Connected components
			\end{tabular}};
\node[state=green!50!black,minimum width=4cm] (H_CCQH) at ([xshift=3cm]H_CC.east) {
			\begin{tabular}{c}
				Connected components\\
				Quotient Hypergraph
			\end{tabular}
		};
\node[state=green!50!black,minimum width=3cm] (H_CCQHL) at ([xshift=5cm]H_CCQH.east) {
			\begin{tabular}{c}
				Connected components\\
				Quotient Hypergraph\\
				Layout
			\end{tabular}
};
\node[state=red,minimum width=3cm] (H_CCL) at ([xshift=-2cm,yshift=-7cm]H.east) {
			\begin{tabular}{c}
				Hypergraph\\
				layout
			\end{tabular}
	};
\node[state=blue,minimum width=3cm] (H_join) at ([xshift=-2cm,yshift=-5cm]H.east) {+};

\node[state=green!50!black,
		right of=H,
		yshift=-0.5cm,
		node distance=-1.5cm,
		minimum width=19cm,
		minimum height=12.5cm,
		anchor=center] (CC) {};
\node[] (CC_Title) at ([yshift=-1.5em,xshift=-7cm]CC.north) {\textbf{Connected component}};
\node[state=orange!70!blue,minimum width=4cm] (CC_D) at ([yshift=-3em,xshift=0cm]CC_Title.north) {Communities detection};
\node[state=orange!70!blue,minimum width=4cm] (CC_QH) at ([xshift=3cm]CC_D.east) {
			\begin{tabular}{c}
				Communities\\
				Quotient Hypergraph
			\end{tabular}
	};
\node[state=orange!70!blue,minimum width=3cm] (CC_CCL) at ([xshift=3.3cm]CC_QH.east) {
			\begin{tabular}{c}
				Communities\\
				Quotient\\
				Hypergraph\\
				Layout
			\end{tabular}
	};
\node[state=orange!70!blue,minimum width=3cm] (CC_SC) at ([xshift=2.3cm]CC_CCL.east) {
			\begin{tabular}{c}
				Size and center\\
				of connected\\
				component
			\end{tabular}
	};


\node[state=orange!70!blue,
		right of=CC,
		yshift=-1cm,
		node distance=-1.5cm,
		minimum width=15cm,
		minimum height=7cm,
		anchor=center] (C) {};
\node[] (C_Title) at ([yshift=-1.5em,xshift=-6cm]C.north) {\textbf{Community}};
\node[state=orange,minimum width=3cm] (C_Coa) at ([yshift=-3cm,xshift=1cm]C_Title.north) {Coarsening};

\node[state=orange,
		right of=C_Coa,
		yshift=0cm,
		node distance=4.5cm,
		minimum width=4cm,
		minimum height=6cm,
		anchor=center] (C_L) {};
\node[] (C_L_Title) at ([yshift=-1.5em,xshift=0cm]C_L.north) {Layout};
\node[state=purple,minimum width=3cm] (C_L_IN) at ([yshift=-3em,xshift=0cm]C_L_Title.north) {Important nodes};
\node[state=purple,minimum width=0.5cm] (C_L_PIN) at ([yshift=-0.5cm,xshift=0cm]C_L_IN.south) {+};
\node[state=purple,minimum width=3cm] (C_L_ON) at ([yshift=-0.5cm,xshift=0cm]C_L_PIN.south) {Other nodes};
\node[state=purple,minimum width=0.5cm] (C_L_PON) at ([yshift=-0.5cm,xshift=0cm]C_L_ON.south) {+};
\node[state=purple,minimum width=3cm] (C_L_Ov) at ([yshift=-0.65cm,xshift=0cm]C_L_PON.south) {
			\begin{tabular}{c}
				Overlapping\\
				optimisation
			\end{tabular}
		};

\node[state=orange,minimum width=3cm] (C_SC) at ([yshift=0cm,xshift=3.5cm]C_L.east) {
			\begin{tabular}{c}
				Size and center\\
				of community
			\end{tabular}};


\fill[fill=green!50!black] ([xshift=2cm]H_CCQH.east)--([xshift=-1cm]CC.north)--([xshift=1cm]CC.north);
\fill[fill=orange!70!blue] ([xshift=1cm]CC_QH.east)--([xshift=-1cm]C.north)--([xshift=1cm]C.north);

\draw[green!50!black,line width=0.25mm] (H_CC.east) -> (H_CCQH.west);
\draw[green!50!black,line width=0.25mm] (H_CCQH.east) -> (H_CCQHL.west);
\draw[orange!70!blue,line width=0.25mm] (CC_D.east) -> (CC_QH.west);
\draw[orange!70!blue,line width=0.25mm] (CC_QH.east) -> (CC_CCL.west);
\draw[orange!70!blue,line width=0.25mm] (CC_CCL.east) -> (CC_SC.west);

\draw[orange,line width=0.25mm] (C_Coa.east) -> (C_L.west);
\draw[orange,line width=0.25mm] (C_L.east) -> (C_SC.west);
\draw[red,line width=0.25mm] (H_join.south) -> (H_CCL.north);

\draw[green!50!black,line width=0.25mm,post,rounded corners=5pt] (H_CCQHL.east)-|([xshift=0.5cm]H_join.north) node[] at ([xshift=3cm,yshift=1em]H_CCQHL.east) {$\left(x_{CC}-\omega_{CC},y_{CC}-\omega_{CC}\right)$};

\draw[orange!70!blue,line width=0.25mm,post,rounded corners=5pt] (CC_SC.east)-|(H_join.north) node[] at ([xshift=2cm,yshift=1em]CC_SC.east) {$\left(x_{Cl}-\omega_{Cl},y_{Cl}-\omega_{Cl}\right)$};

\draw[orange,line width=0.25mm,post,rounded corners=5pt] (C.east)-|([xshift=-0.5cm]H_join.north) node[] at ([xshift=2cm,yshift=1em]C.east) {$\left(x_N-\omega_C,y_N-\omega_C\right)$};

\end{tikzpicture}\end{center}

\caption{Principle of the calculation of coordinates for large hypergraphs.}
\label{Fig: Vertex layout}
\end{figure}
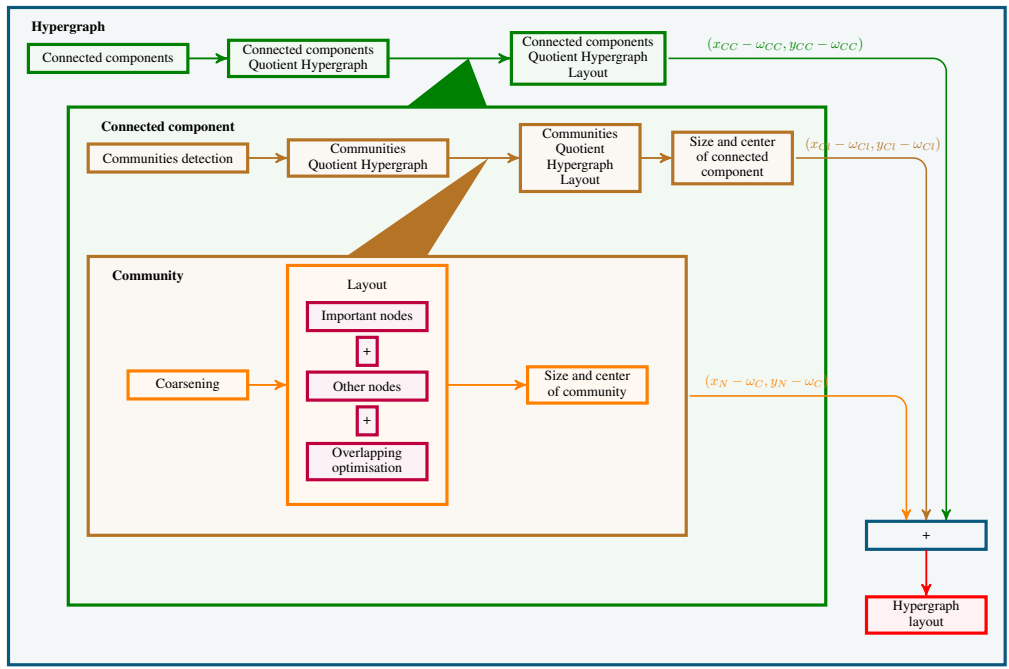

Large hypergraphs require strategies to be properly visualized. We
have shown in \citet{ouvrard2017networks} that switching from the
2-section representation to the incidence representation reduces considerably
the overall cognitive load, as the number of edges to be represented
is effectively highly diminished in real networks; moreover, true
collaborations are seen more distinguishably. An example is given
in Figure \ref{Fig exp : title:((bgo AND cryst*) OR (bgo AND calor*)) abstract:((bgo AND cryst*) OR (bgo AND calor*)) : organisations eng},
that shows how the representation in the two modes impact on the cognitive
load.

Other strategies have been implemented in order to improve the global
layout and information retrieval of large hypergraphs. It includes
the segmentation of the hypergraph into connected components, ordering
them by the number of references they represent, and then with respect
to the number of vertices within co-occurrences they contain. The
layout of each connected component is computed separately, by considering
the communities based on the Louvain algorithm presented in \citet{blondel2008fast}
and using a force directed algorithm, named ForceAtlas2, presented
in \citet{jacomy2014forceatlas2}, to ensure the layout of the communities
and of the vertices inside the communities. To achieve a proper layout,
vertices of importance for the community are placed first and the
remaining vertices are then placed around those fixed important vertices,
with an optimization of the layout. The principle of the calculus
is given in Figure \ref{Fig: Vertex layout} and the results is shown
in Figure \ref{Fig: Large hypergraph example 1} and Figure \ref{Fig: Large hypergraph example 2}.

Optimizing the layout of a large hypergraph is a challenging task.
It requires to capture important vertices of the hypergraph, i.e.\
the ones we do not want to loose by overwhelming or overlapping in
the representation. This requirement has been the starting point of
our quest for a diffusion tensor.

\section{Morphisms, category and hypergraphs}

\label{subsec:Morphisms,-category-and hypergraphs}

\subsubsection{A parenthesis on categories}

The notions presented in this paragraph are taken from different sources
including the page on Category\footnote{\href{https://en.wikipedia.org/wiki/Category_(mathematics)}{https://en.wikipedia.org/wiki/Category\_(mathematics)}}
and the reference book \citet{adamek2004abstract}.

A \textbf{category}\index{category} $\mathcal{C}$ is formed by:
\begin{itemize}
\item a class of \textbf{objects}\index{object}\index{category!object}
denoted $\text{Ob}\left(\mathcal{C}\right)$
\item a class of elements called \textbf{arrows}\index{arrow}\index{category!arrow}
or \textbf{morphisms}\index{morphism}\index{category!morphism} $\text{mor}\left(\mathcal{C}\right)$
\end{itemize}
and such that:
\begin{itemize}
\item given a pair $\left(X,Y\right),$ with $X\in\text{Ob}\left(\mathcal{C}\right)$
and $Y\in\text{Ob}\left(\mathcal{C}\right),$ there exists a set:
$\mbox{\text{Hom}\ensuremath{\left(X,Y\right)\in\text{mor}\left(\mathcal{C}\right)}}$
called the morphisms from $X$ to $Y$ in $\mathcal{C}.$ If $f$
is such an homomorphism, we write it $f:X\rightarrow Y.$ $X$ is
called the source and $Y$ the target.
\item for every $X\in\text{Ob}\left(\mathcal{C}\right),$ there exists a
morphism written $\text{id}_{X}\in\text{Hom}\left(X,X\right)$ called
the identity on $X.$
\item a composition law $\circ$ such that for two morphisms $f:A\rightarrow B$
and $g:B\rightarrow C$ are associated to a morphism $g\circ f:A\rightarrow C$
called composition of $f$ and $g$, such that:
\item the composition is an associative law: for any $f\in\text{Hom}\left(A,B\right)$,
$g\mbox{\ensuremath{\in\text{Hom}\left(B,C\right)}}$ and $h\in\text{Hom}\left(C,D\right):$
\[
\left(h\circ g\right)\circ f=h\circ\left(g\circ f\right)
\]
\item the identity are the neutral elements of the composition: given any
morphism $f\in\text{Hom}\left(A,B\right):$ 
\[
f\circ\text{Id}_{A}=f=\text{Id}_{B}\circ f
\]
\end{itemize}
For instance, the category where the objects are sets and where morphisms
are applications between sets with the natural composition of applications
is the Set category, written $\text{Set}$.

As summarized in \citet{isah2015concept}, different kinds of morphisms
exist. We consider a morphism $f:A\rightarrow B$ of a category $\mathcal{C}$
where $A$ and $B$ are two objects of $\mathcal{C}.$ $f$ is:
\begin{itemize}
\item a \textbf{monomorphism}\index{category!monomorphism}\index{monomorphism}
if for all pairs of morphisms of $\mathcal{C}$ with $h:C\rightarrow A$
and $k:C\rightarrow A,$ $f\circ h=f\circ k$ implies $h=k$ i.e.\
if $f$ is left-cancellable.
\item a \textbf{split monomorphism} (or \textbf{section})\index{category!split monomorphism}\index{split monomorphism}
if it is left-invertible i.e.\ there exists a morphism $g:B\rightarrow A$
of $\mathcal{C}$ such that: $g\circ f=\text{Id}_{A}.$
\item an \textbf{epimorphism}\index{category!epimorphism}\index{epimorphism}
if it is right-cancellable i.e.\ for all pairs $h:B\rightarrow C$
and $k:B\rightarrow C$, $h\circ f=k\circ f$ implies $h=k.$
\item a \textbf{split epimorphism} (or \textbf{retraction})\index{category!epimorphism}\index{epimorphism}
if it is right-invertible i.e.\ there exists $g:B\rightarrow A$
such that $f\circ g=\text{Id}_{B}.$
\item a \textbf{bimorphism}\index{category!bimorphism}\index{bimorphism}
if it is both a monomorphism and an epimorphism.
\item an \textbf{isomorphism}\index{category!isomorphism}\index{isomorphism}
if it is both a split monomorphism and a split epimorphism.
\end{itemize}
A \textbf{sub-category}\index{category!sub-category}\index{sub-category}
$\mathcal{S}$ of a category $\mathcal{C}$ is a category where the
objects are objects of $\mathcal{C}$ and where the morphisms are
morphism of $\mathcal{C}$ between objects of $\mathcal{S}.$

When a sub-category $\mathcal{S}$ of a category $\mathcal{C}$ has
its morphisms that are exactly all the morphisms of $\mathcal{C}$
between objects of $\mathcal{S},$ this sub-category is said \textbf{full}\index{category!full}\index{full category}.

Correspondences between two categories are possible by associating
to each object (resp. a morphism) of the first category, an object
(resp. a morphism) of the second category: such correspondences are
called functors.

Let $\mathcal{C}$ and $\mathcal{D}$ be two categories. A \textbf{functor}\index{functor}\index{category!functor}
from $\mathcal{C}$ to $\mathcal{D}$ (or covariant functor), written
$F:\mathcal{C}\rightarrow\mathcal{D},$ is defined by giving:
\begin{itemize}
\item a function such that for all $X\in\text{Obj}\left(\mathcal{C}\right),$
there exists an object $F\left(X\right)\in\text{Obj}\left(\mathcal{D}\right);$
\item a function such that for all $f\in\text{mor}\left(\mathcal{C}\right)$,
with $f:X\rightarrow Y,$ there exists $F\left(f\right)\in\text{mor}\left(\mathcal{D}\right)$
such that: $F(f):F\left(X\right)\rightarrow F\left(Y\right);$
\end{itemize}
and, such that:
\begin{itemize}
\item for all $X\in\mathcal{C},$ we have (identity conservation): 
\[
F\left(\text{id}_{X}\right)=\text{id}_{F(X)};
\]
\item for all $X,Y,Z\in\mathcal{C}$ and $f\in\text{Hom}\left(X,Y\right)$
and $g\in\text{Hom}\left(Y,Z\right),$ it holds (composition conservation):
\[
F\left(g\circ f\right)=F\left(g\right)\circ F\left(f\right).
\]
\end{itemize}
The construction of new categories can be achieved by using the \textbf{dual
of a category} $\mathcal{C},$\index{category!dual}\index{dual!category}
written $\mathcal{C}^{\text{op}},$ where the objects are the ones
of $\mathcal{C}$ and the arrows of $\mathcal{C}^{\text{op}}$ are
the ones of $\mathcal{C}$ taken reversely. An other usual way is
to consider the product category of two categories $\mathcal{C}$
and $\mathcal{C}^{\prime},$ denoted $\mathcal{C}\times\mathcal{C}^{\prime},$
where the objects are pairs of objects of $\mathcal{C}$ and $\mathcal{C}^{\prime}$
taken in this order and the arrows are constituted of pairs of arrows
of $\mathcal{C}$ and $\mathcal{C}^{\prime}$ taken in this order.

\subsubsection{Hypergraph homomorphism and category}

In \citet{dorfler1980categorytheoretical}, the authors use the category
theory to consider hypergraph product. In this article, a hypergraph
$\left(V,E,f\right)$ is defined as a triple with $V$ the vertex
set, $E$ the hyperedge set and $f:E\rightarrow\mathcal{P}\left(V\right)\backslash\left\{ \emptyset\right\} $
a function that assigns to each hyperedge its non-empty set of vertices.
Duplicated hyperedges are allowed, i.e.\ two hyperedges can have
the same image by $f.$

The following definitions and properties are taken from \citet{dorfler1980categorytheoretical}.

The \textbf{covariant power set functor}\index{category!covariant power set functor}\index{covariant power set functor}
$\mathcal{P}:\text{Set}\rightarrow\text{Set}$ is defined by:
\begin{itemize}
\item for $A\in\text{Obj}\left(\text{Set}\right)$, $\mathcal{P}\left(A\right)$
is the set of all subsets of $A;$
\item for $h\in\text{mor\ensuremath{\left(\text{Set}\right)}}$, $h:A\rightarrow B$,
$\mathcal{P}\left(h\right)$ is the homomorphism: $\mathcal{P}\ensuremath{\left(A\right)}\rightarrow\mathcal{P}\left(B\right)$
such that for all $A'\subset A$, $\mathcal{P}\left(h\right):A'\rightarrow h\left(A'\right).$
\end{itemize}
Considering two hypergraphs $\mathcal{H}_{1}=\left(V_{1},E_{1},f_{1}\right)$
and $\mathcal{H}_{2}=\left(V_{2},E_{2},f_{2}\right),$ a \textbf{homomorphism
of hypergraphs}\index{homomorphism!hypergraph}\index{hypergraph!homomorphism}
$h:\mathcal{H}_{1}\rightarrow\mathcal{H}_{2}$ consists of two functions,
abusively written $h$, $h:V_{1}\rightarrow V_{2}$ and $h:E_{1}\rightarrow E_{2}$
such that the diagram:

\begin{center}
\begin{tikzcd} 
E_1 \arrow[r, "f_1"] \arrow[d,"h"] 
& \mathcal{P}\left(V_{1}\right)\backslash\left\{ \emptyset\right\}  \arrow[d, "\mathcal{P}\left(h\right)"] \\ E_2 \arrow[r, "f_2"] 
& \mathcal{P}\left(V_{2}\right)\backslash\left\{ \emptyset\right\} 
\end{tikzcd}
\end{center}

is commutative.

We write $\mbox{\ensuremath{\boldsymbol{\mathcal{H}}\text{\textbf{yp}}}}$
the category whose objects are hypergraphs and whose morphisms are
hypergraph homomorphisms. We write $\mbox{\ensuremath{\boldsymbol{\mathcal{G}}\text{\textbf{raph}}}}$
the category of simple graphs and their homomorphisms.

The results of interest for this article are the followings:

\begin{lemma}

$\mbox{\ensuremath{\boldsymbol{\mathcal{G}}\text{\textbf{raph}}}}$
is a full subcategory of $\mbox{\ensuremath{\boldsymbol{\mathcal{H}}\text{\textbf{yp}}}}.$

\end{lemma}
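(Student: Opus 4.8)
The plan is to exhibit the inclusion of $\boldsymbol{\mathcal{G}}\text{\textbf{raph}}$ into $\boldsymbol{\mathcal{H}}\text{\textbf{yp}}$ on objects and morphisms, check it is a functor, and then verify fullness. First I would make precise how a simple graph is seen as an object of $\boldsymbol{\mathcal{H}}\text{\textbf{yp}}$: in the terminology above, a simple graph $G=(V,E)$ is exactly a hypergraph all of whose hyperedges have cardinality $2$ (rank and anti-rank both equal to $2$, since no loops are allowed and hyperedges are nonempty) with no repeated hyperedge, so that the incidence function $f\colon E\to\mathcal{P}(V)\setminus\{\emptyset\}$ is injective and lands in the two-element subsets of $V$. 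This identification yields $\text{Obj}(\boldsymbol{\mathcal{G}}\text{\textbf{raph}})\subseteq\text{Obj}(\boldsymbol{\mathcal{H}}\text{\textbf{yp}})$.

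Next I would check that a graph homomorphism induces a hypergraph homomorphism. Given graphs $G_1,G_2$ and a vertex map $h\colon V_1\to V_2$ with $\{u,v\}\in E_1\Rightarrow\{h(u),h(v)\}\in E_2$, I define the edge component $h_E$ by sending the edge with vertex set $\{u,v\}$ to the unique edge of $G_2$ with vertex set $\{h(u),h(v)\}$; uniqueness is guaranteed by the injectivity of $f_2$ (simplicity of $G_2$). With this choice the square in the definition of a hypergraph homomorphism commutes, since for each edge $e$ with $f_1(e)=\{u,v\}$ one has $\mathcal{P}(h)(f_1(e))=\{h(u),h(v)\}=f_2(h_E(e))$. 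Identities and composition are clearly preserved because both components are ordinary set maps, so the assignment is a functor and $\boldsymbol{\mathcal{G}}\text{\textbf{raph}}$ is a subcategory of $\boldsymbol{\mathcal{H}}\text{\textbf{yp}}$.

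The heart of the statement is fullness: every hypergraph homomorphism between two graphs must be induced by a graph homomorphism. Let $(h_V,h_E)\colon G_1\to G_2$ be a hypergraph homomorphism with $G_1,G_2$ graphs. For any edge $e\in E_1$ with $f_1(e)=\{u,v\}$, commutativity gives $f_2(h_E(e))=\mathcal{P}(h_V)(\{u,v\})=\{h_V(u),h_V(v)\}$. Since $h_E(e)\in E_2$ and $G_2$ is a graph, $f_2(h_E(e))$ has cardinality exactly $2$; hence $h_V(u)\neq h_V(v)$ and $\{h_V(u),h_V(v)\}\in E_2$. This is precisely the edge-preservation condition, so $h_V$ is a graph homomorphism. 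Moreover $h_E$ is forced: by simplicity of $G_2$ it is the only edge map compatible with the commuting square, so $(h_V,h_E)$ coincides with the hypergraph homomorphism induced by $h_V$. Therefore the two Hom-sets agree and the subcategory is full.

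The step I expect to be the main obstacle—really the only place where the specific structure of graphs is used—is the non-degeneracy in the fullness argument: one must invoke that in the target graph every hyperedge has cardinality $2$ to rule out an edge being collapsed to a singleton, together with the simplicity of $G_2$ to pin down the edge component uniquely. Everything else is routine bookkeeping.
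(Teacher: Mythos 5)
Your argument is correct and complete. Note, however, that the paper itself offers no proof of this lemma: it is stated as one of several results imported from \citet{dorfler1980categorytheoretical}, so there is no in-paper argument to compare yours against. What you write is the natural verification relative to the definitions the paper does give (a hypergraph as a triple $\left(V,E,f\right)$ with $f:E\rightarrow\mathcal{P}\left(V\right)\backslash\left\{ \emptyset\right\}$, a homomorphism as a pair of maps making the square with $\mathcal{P}\left(h\right)$ commute, and a graph as a simple, loopless hypergraph of rank at most $2$). The two points that genuinely need checking are exactly the ones you isolate: the edge component of a morphism into a graph is uniquely determined because $f_{2}$ is injective, and fullness requires ruling out the collapse of an edge to a singleton, which follows because $\mathcal{P}\left(h_{V}\right)\left(\left\{ u,v\right\} \right)$ is an image set of cardinality at most $2$ while $f_{2}$ takes values only in two-element subsets since $G_{2}$ has no loops. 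With those two observations the Hom-sets coincide and the subcategory is full, as you conclude.
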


\begin{lemma}

Duality of hypergraphs is a covariant functor $d:\mbox{\ensuremath{\boldsymbol{\mathcal{H}}\text{\textbf{yp}}}}\rightarrow\mbox{\ensuremath{\boldsymbol{\mathcal{H}}\text{\textbf{yp}}}}.$

\end{lemma}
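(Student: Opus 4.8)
The plan is to specify the functor $d$ on objects and on morphisms and then to check the two functoriality axioms, with essentially all of the genuine work sitting in the verification that $d$ carries a homomorphism to a homomorphism. On objects $d$ is the dual already constructed above: to $\mathcal{H}=(V,E,f)$ it assigns $\mathcal{H}^{*}=(V^{*},E^{*},\iota^{*})$ with $V^{*}=E$, $E^{*}=V$, and $\iota^{*}(v)=H(v)=\{e\in E:\ v\in f(e)\}$ for every $v\in V$. The one point to confirm for $\mathcal{H}^{*}$ to be an object of $\boldsymbol{\mathcal{H}}\text{\textbf{yp}}$ is that each $\iota^{*}(v)$ is a nonempty subset of $V^{*}=E$; this is exactly the statement that $v$ is not isolated, i.e. the covering condition $\bigcup_{e\in E}f(e)=V$, which I take as standing throughout.

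On morphisms, given $h\colon\mathcal{H}_{1}\to\mathcal{H}_{2}$ with components $h\colon V_{1}\to V_{2}$ and $h\colon E_{1}\to E_{2}$, I define $d(h)=h^{*}\colon\mathcal{H}_{1}^{*}\to\mathcal{H}_{2}^{*}$ by interchanging the two components: the vertex map of $h^{*}$ is $h|_{E_{1}}\colon E_{1}=V_{1}^{*}\to V_{2}^{*}=E_{2}$ and its edge map is $h|_{V_{1}}\colon V_{1}=E_{1}^{*}\to E_{2}^{*}=V_{2}$. Since this interchange leaves the source and target in place, $h^{*}$ still points from $\mathcal{H}_{1}^{*}$ to $\mathcal{H}_{2}^{*}$, and this is what makes $d$ covariant rather than contravariant.

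The main obstacle is to prove that $h^{*}$ really is a hypergraph homomorphism, i.e. that the dual square commutes: for every $v\in V_{1}=E_{1}^{*}$,
\[
\iota_{2}^{*}\bigl(h(v)\bigr)=\mathcal{P}(h)\bigl(\iota_{1}^{*}(v)\bigr),\qquad\text{i.e.}\qquad H_{2}\bigl(h(v)\bigr)=h\bigl(H_{1}(v)\bigr).
\]
Here the right-hand side is $\{h(e):\ e\in E_{1},\ v\in f_{1}(e)\}$ and the left-hand side is $\{e'\in E_{2}:\ h(v)\in f_{2}(e')\}$. The inclusion $h(H_{1}(v))\subseteq H_{2}(h(v))$ is immediate from the commutativity of the original square defining $h$: if $v\in f_{1}(e)$ then $h(v)\in\mathcal{P}(h)(f_{1}(e))=f_{2}(h(e))$, so $h(e)\in H_{2}(h(v))$. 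The reverse inclusion, asserting that every hyperedge $e'$ of $\mathcal{H}_{2}$ incident to $h(v)$ is of the form $h(e)$ for some $e$ incident to $v$, is the delicate point: it is an incidence-reflection statement that does not follow from commutativity of the square alone, and it is precisely where the exact form of the morphisms of $\boldsymbol{\mathcal{H}}\text{\textbf{yp}}$ from \citet{dorfler1980categorytheoretical} must be invoked. I expect this to be the step demanding the most care, and its verification is what genuinely pins the lemma down.

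Once $d$ is known to send homomorphisms to homomorphisms, the two axioms are purely formal. Identity preservation holds because $d(\mathrm{id}_{\mathcal{H}})$ simply interchanges the components $\mathrm{id}_{V}$ and $\mathrm{id}_{E}$, which is $\mathrm{id}_{\mathcal{H}^{*}}$. Composition preservation follows from the facts that homomorphisms compose component-wise and that the interchange commutes with component-wise composition: for $h\colon\mathcal{H}_{1}\to\mathcal{H}_{2}$ and $g\colon\mathcal{H}_{2}\to\mathcal{H}_{3}$, both $(g\circ h)^{*}$ and $g^{*}\circ h^{*}$ have vertex map $g|_{E}\circ h|_{E}$ and edge map $g|_{V}\circ h|_{V}$, so they coincide, giving $d(g\circ h)=d(g)\circ d(h)$ and confirming that $d$ is a covariant functor.
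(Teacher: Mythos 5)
The paper does not actually prove this lemma: it is quoted from \citet{dorfler1980categorytheoretical} without argument, so there is no in-paper proof to compare against. Judged on its own terms, your proposal has a genuine gap, and you have put your finger on exactly where it is without closing it. You correctly reduce everything to the identity $H_{2}(h(v))=h(H_{1}(v))$, you prove the inclusion $h(H_{1}(v))\subseteq H_{2}(h(v))$ from the commutativity of the defining square, and you then declare the reverse inclusion to be ``the delicate point'' whose verification ``genuinely pins the lemma down''---and stop. That reverse inclusion is the entire mathematical content of the lemma; announcing that it requires care is not the same as supplying it, so the proof is incomplete.

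Worse, with the definition of morphism used in this paper (the square $f_{2}\circ h=\mathcal{P}(h)\circ f_{1}$ commutes, i.e.\ an \emph{equality} of vertex sets), the reverse inclusion is false for general homomorphisms, so the step you defer cannot be filled in as stated. Take $\mathcal{H}_{1}$ with one vertex $v$ and one hyperedge $e$, $f_{1}(e)=\{v\}$, and $\mathcal{H}_{2}$ with one vertex $w$ and two hyperedges $e_{1}',e_{2}'$ with $f_{2}(e_{1}')=f_{2}(e_{2}')=\{w\}$ (duplicated hyperedges are explicitly allowed here). The pair $v\mapsto w$, $e\mapsto e_{1}'$ is a homomorphism, yet $H_{2}(w)=\{e_{1}',e_{2}'\}$ while $h(H_{1}(v))=\{e_{1}'\}$, so the dual square does not commute and your $h^{*}$ is not a morphism of $\boldsymbol{\mathcal{H}}\text{\textbf{yp}}$. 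To make $d$ a functor one must either weaken the morphism condition to the inclusion $\mathcal{P}(h)(f_{1}(e))\subseteq f_{2}(h(e))$---in which case the easy inclusion you proved is the whole argument---or restrict to morphisms that reflect incidence (every hyperedge through $h(v)$ arising from one through $v$). Identifying which of these the source actually uses is precisely the information your proof defers rather than supplies; the purely formal checks of the identity and composition axioms at the end are fine but are contingent on this unresolved step.
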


\begin{lemma}

The incidence bipartite graph of a hypergraph allows a covariant functor
$B:\mbox{\ensuremath{\boldsymbol{\mathcal{H}}\text{\textbf{yp}}}}\rightarrow\mbox{\ensuremath{\boldsymbol{\mathcal{G}}\text{\textbf{raph}}}}.$

\end{lemma}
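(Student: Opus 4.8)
The plan is to exhibit $B$ explicitly on objects and on morphisms, and then to verify the two functor axioms, the crux being that the morphism assignment lands in $\mbox{\ensuremath{\boldsymbol{\mathcal{G}}\text{\textbf{raph}}}}$, i.e.\ that it respects adjacency. First I would define $B$ on objects. Given a hypergraph $\mathcal{H}=\left(V,E,f\right),$ its incidence bipartite graph $B\left(\mathcal{H}\right)$ has vertex set the disjoint union $V\sqcup E$ and an edge $\left\{v,e\right\}$ precisely when $v\in f\left(e\right),$ for $v\in V$ and $e\in E.$ I would then check that this is indeed an object of $\mbox{\ensuremath{\boldsymbol{\mathcal{G}}\text{\textbf{raph}}}}$: since every edge joins an element of $V$ to an element of $E,$ and $V$ and $E$ are kept disjoint, there are no loops; and since the incidence $v\in f\left(e\right)$ is a yes/no relation, no pair $\left\{v,e\right\}$ occurs more than once, so the graph is simple.

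Next I would define $B$ on morphisms. Given a hypergraph homomorphism $h:\mathcal{H}_1\rightarrow\mathcal{H}_2$ with components $h:V_1\rightarrow V_2$ and $h:E_1\rightarrow E_2,$ I set $B\left(h\right)$ to be the map $V_1\sqcup E_1\rightarrow V_2\sqcup E_2$ acting as $h$ on each summand; this automatically respects the bipartition since vertices go to vertices and hyperedges to hyperedges. The essential step is verifying that $B\left(h\right)$ preserves edges, and this is exactly where the commutativity of the defining square for $h$ is used. If $\left\{v,e\right\}$ is an edge of $B\left(\mathcal{H}_1\right),$ then $v\in f_1\left(e\right)$; commutativity gives $f_2\left(h\left(e\right)\right)=\mathcal{P}\left(h\right)\left(f_1\left(e\right)\right)=h\left(f_1\left(e\right)\right),$ so $h\left(v\right)\in f_2\left(h\left(e\right)\right),$ which says $\left\{h\left(v\right),h\left(e\right)\right\}$ is an edge of $B\left(\mathcal{H}_2\right).$ Hence $B\left(h\right)$ is a graph homomorphism.

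Finally I would confirm the two functoriality axioms, which are immediate from the summand-wise construction. Since the identity homomorphism on $\mathcal{H}$ has both components equal to the identity, $B\left(\text{id}_{\mathcal{H}}\right)$ acts as the identity on each summand of $V\sqcup E,$ hence $B\left(\text{id}_{\mathcal{H}}\right)=\text{id}_{B\left(\mathcal{H}\right)}.$ For composition, given $h:\mathcal{H}_1\rightarrow\mathcal{H}_2$ and $k:\mathcal{H}_2\rightarrow\mathcal{H}_3,$ applying $k$ then $h$ summand by summand yields $B\left(k\circ h\right)=B\left(k\right)\circ B\left(h\right).$ I expect the only genuine subtlety to be the edge-preservation check above, which forces the use of the commuting square; the simplicity of the target graph and the axioms are routine, provided one is careful to treat $V$ and $E$ as a genuine disjoint union so that no accidental loops or identifications are introduced.
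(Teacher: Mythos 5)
Your proposal is correct and matches the construction the paper uses: the paper does not spell out a proof for this lemma (it is cited from D\"orfler and Waller), but the morphism assignment it records---$B\left(h\right):\mathcal{B}_{\mathcal{H}_{1}}\rightarrow\mathcal{B}_{\mathcal{H}_{2}}$ with $x\mapsto h(x)$ and $e\mapsto h(e)$---is exactly yours, and your verification of edge preservation via the commuting square with $\mathcal{P}\left(h\right)$, together with the functoriality axioms, correctly supplies the details the paper leaves implicit.
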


Given two hypergraphs $\mathcal{H}_{1}$ and $\mathcal{H}_{2}$ and
their respective incidence bipartite graph $\mathcal{B}_{\mathcal{H}_{1}}$
and $\mathcal{B}_{\mathcal{H}_{2}}$, and a homomorphism $h:\mathcal{H}_{1}\rightarrow\mathcal{H}_{2},$
$B\left(h\right):\mathcal{B}_{\mathcal{H}_{1}}\rightarrow\mathcal{B}_{\mathcal{H}_{2}}$
with $x\mapsto h(x)$ and $e\mapsto h(e).$

It is shown in \citet{dorfler1980categorytheoretical} that $B$ confuses
a hypergraph with its dual.

\begin{lemma}

The 2-section of a hypergraph induces the definition of a covariant
functor $G:\mbox{\ensuremath{\boldsymbol{\mathcal{H}}\text{\textbf{yp}}}}\rightarrow\mbox{\ensuremath{\boldsymbol{\mathcal{G}}\text{\textbf{raph}}}}.$

\end{lemma}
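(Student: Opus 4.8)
The plan is to define $G$ on objects and morphisms and then to verify the three functor axioms in order: well-definedness on morphisms (that $G$ sends a hypergraph homomorphism to a graph homomorphism), preservation of identities, and preservation of composition. On objects I would set $G(\mathcal{H})\overset{\Delta}{=}\left[\mathcal{H}\right]_{2}$. On a morphism $h:\mathcal{H}_{1}\rightarrow\mathcal{H}_{2}$, which by definition is a pair of maps $h:V_{1}\rightarrow V_{2}$ and $h:E_{1}\rightarrow E_{2}$ making the hypergraph-homomorphism square commute, I would let $G(h)$ be the vertex component $h:V_{1}\rightarrow V_{2}$, read as a map between the vertex sets $V_{\left[2\right]}$ of the two 2-sections.

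The substantive step is to show that this vertex map preserves adjacency. Let $\left\{ u,v\right\}$ be an edge of $\left[\mathcal{H}_{1}\right]_{2}$, so $u\neq v$ are 2-adjacent, i.e.\ there is $e\in E_{1}$ with $u,v\in f_{1}(e)$. Commutativity of the defining diagram gives $f_{2}\bigl(h(e)\bigr)=\mathcal{P}(h)\bigl(f_{1}(e)\bigr)=h\bigl(f_{1}(e)\bigr)$, whence $h(u),h(v)\in f_{2}\bigl(h(e)\bigr)$. Thus $h(u)$ and $h(v)$ belong to a common hyperedge of $\mathcal{H}_{2}$ and are therefore 2-adjacent, so $\left\{ h(u),h(v)\right\}$ is an edge of $\left[\mathcal{H}_{2}\right]_{2}$. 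This is exactly the adjacency-preservation required of a morphism of $\boldsymbol{\mathcal{G}}\text{\textbf{raph}}$.

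With the morphism part established, the remaining two axioms are pure bookkeeping, because $G$ only extracts the vertex component: the vertex map of $\text{id}_{\mathcal{H}}$ is $\text{id}_{V}$, giving $G(\text{id}_{\mathcal{H}})=\text{id}_{\left[\mathcal{H}\right]_{2}}$; and the vertex component of a composite $h'\circ h$ is the composite of vertex components, giving $G(h'\circ h)=G(h')\circ G(h)$.

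The main obstacle is the edge-collapse case, when $h(u)=h(v)$ for some edge $\left\{ u,v\right\}$ of $\left[\mathcal{H}_{1}\right]_{2}$: then $\left\{ h(u),h(v)\right\}$ degenerates to a single vertex, which is not an edge of the simple graph $\left[\mathcal{H}_{2}\right]_{2}$. For $G$ to be a functor into $\boldsymbol{\mathcal{G}}\text{\textbf{raph}}$ this case must be permitted, so the delicate point is to fix the convention on the morphisms of $\boldsymbol{\mathcal{G}}\text{\textbf{raph}}$ under which adjacency may be sent to adjacency-or-equality --- equivalently, to work with reflexive graphs so that a collapsed edge lands on a loop. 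Once a convention compatible with the first Lemma (fullness of $\boldsymbol{\mathcal{G}}\text{\textbf{raph}}$ in $\boldsymbol{\mathcal{H}}\text{\textbf{yp}}$) is fixed, the inclusion $h(u),h(v)\in f_{2}(h(e))$ derived above already covers both the genuine-edge and the collapsed-edge cases, and only this consistency check --- rather than any further computation --- stands between the outline and a complete proof.
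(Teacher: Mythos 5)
The paper states this lemma without proof, simply importing it from \citet{dorfler1980categorytheoretical}, so there is no in-paper argument to compare against. Your construction is the standard one and your central computation is correct: setting $G(\mathcal{H})=\left[\mathcal{H}\right]_{2}$ and $G(h)$ equal to the vertex component, commutativity of the defining square gives $f_{2}\left(h(e)\right)=h\left(f_{1}(e)\right)$, so two vertices lying in a common hyperedge of $\mathcal{H}_{1}$ are sent to vertices lying in a common hyperedge of $\mathcal{H}_{2}$; identity and composition preservation are indeed immediate because $G$ only extracts the vertex component.

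The edge-collapse issue you raise is, however, more than a ``consistency check'' under the definitions actually adopted in this paper: it is a genuine obstruction that forces one of the definitions to change. Here a graph is a simple multi-graph \emph{without loop}, the 2-section only contains $\left\{ v_{i_{1}},v_{i_{2}}\right\} $ for \emph{distinct} 2-adjacent vertices, and by the fullness lemma the morphisms of $\mbox{\ensuremath{\boldsymbol{\mathcal{G}}\text{\textbf{raph}}}}$ are exactly the hypergraph homomorphisms between loopless graphs --- and the commutative square shows such a homomorphism can never identify the two endpoints of an edge, since the image hyperedge would then be a singleton, i.e.\ a loop. By contrast, $\mbox{\ensuremath{\boldsymbol{\mathcal{H}}\text{\textbf{yp}}}}$ does contain morphisms identifying two vertices of a common hyperedge (e.g.\ collapsing a single hyperedge $\left\{ u,v\right\} $ onto a single loop $\left\{ w\right\} $), and for such an $h$ the map $G(h)$ is not a morphism of $\mbox{\ensuremath{\boldsymbol{\mathcal{G}}\text{\textbf{raph}}}}$ at all. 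So the lemma as transplanted into this paper only holds after either admitting loops in the target category (reading a collapsed edge as a loop, as you suggest) or weakening graph morphisms to ``adjacency goes to adjacency or equality'' --- and the latter choice is in tension with the fullness lemma. You have correctly isolated the one real difficulty; just be aware that it cannot be discharged within the paper's stated conventions, which is a defect of the statement as reproduced here rather than of your argument.
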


\begin{lemma}

To the intersection graph of a hypergraph corresponds the definition
of a covariant functor $I:\mbox{\ensuremath{\boldsymbol{\mathcal{H}}\text{\textbf{yp}}}}\rightarrow\mbox{\ensuremath{\boldsymbol{\mathcal{G}}\text{\textbf{raph}}}}.$

\end{lemma}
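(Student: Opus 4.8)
The plan is to avoid any direct verification and instead exhibit $I$ as a composite of two functors already shown to be covariant. The key observation is that the intersection graph of a hypergraph coincides with the $2$-section of its dual, that is, $\left[\mathcal{H}\right]_{\mathcal{I}}=\left[\mathcal{H}^{*}\right]_{2}$ for every hypergraph $\mathcal{H}$. Granting this identification at the level of both objects and morphisms, the lemma follows immediately: since duality $d:\boldsymbol{\mathcal{H}}\text{\textbf{yp}}\to\boldsymbol{\mathcal{H}}\text{\textbf{yp}}$ and the $2$-section $G:\boldsymbol{\mathcal{H}}\text{\textbf{yp}}\to\boldsymbol{\mathcal{G}}\text{\textbf{raph}}$ are both covariant functors by the preceding lemmas, their composite $I\overset{\Delta}{=}G\circ d$ is a covariant functor $\boldsymbol{\mathcal{H}}\text{\textbf{yp}}\to\boldsymbol{\mathcal{G}}\text{\textbf{raph}}$, and both identity conservation and composition conservation are inherited for free.

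First I would settle the identification on objects. The vertex set of both $\left[\mathcal{H}\right]_{\mathcal{I}}$ and $\left[\mathcal{H}^{*}\right]_{2}$ is $V^{*}=E$, so only adjacency must be matched. Two distinct hyperedges $e_{j_{1}},e_{j_{2}}$ are adjacent in $\left[\mathcal{H}^{*}\right]_{2}$ precisely when they are $2$-adjacent as vertices of $\mathcal{H}^{*}$, i.e.\ when some hyperedge of $\mathcal{H}^{*}$ contains both. Since the hyperedges of $\mathcal{H}^{*}$ are exactly the stars $H(v)=\left\{ e:v\in f(e)\right\}$ for $v\in V$, this holds iff there exists $v$ with $e_{j_{1}},e_{j_{2}}\in H(v)$, that is iff $f(e_{j_{1}})\cap f(e_{j_{2}})\neq\emptyset$, which is precisely the condition for adjacency in $\left[\mathcal{H}\right]_{\mathcal{I}}$.

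Next I would check the identification on morphisms, the step needing the most care. Given a homomorphism $h:\mathcal{H}_{1}\to\mathcal{H}_{2}$, duality swaps the roles of its two components, so $d(h)$ carries the edge-component $h:E_{1}\to E_{2}$ into the vertex-component of $d(h)$; the $2$-section functor $G$ then retains exactly the vertex-component. Hence $G(d(h))$ is the map $h|_{E}:E_{1}\to E_{2}$ acting on the vertices $V^{*}=E$ of the intersection graphs, which is the only natural candidate for $I(h)$. The substantive content hiding here is that $h|_{E}$ really is a graph homomorphism $\left[\mathcal{H}_{1}\right]_{\mathcal{I}}\to\left[\mathcal{H}_{2}\right]_{\mathcal{I}}$, i.e.\ that it sends intersecting hyperedges to intersecting hyperedges: if $v\in f_{1}(e_{j_{1}})\cap f_{1}(e_{j_{2}})$, then by commutativity of the defining square $\mathcal{P}(h)\circ f_{1}=f_{2}\circ h$ one has $h(v)\in f_{2}(h(e_{j_{1}}))\cap f_{2}(h(e_{j_{2}}))$. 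This is already guaranteed by functoriality of $G$ and $d$, but it is reassuring to see it fall directly out of the square.

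The main obstacle I anticipate is the degenerate case $h(e_{j_{1}})=h(e_{j_{2}})$, where an edge of $\left[\mathcal{H}_{1}\right]_{\mathcal{I}}$ must collapse to a single vertex of $\left[\mathcal{H}_{2}\right]_{\mathcal{I}}$. This is not a defect peculiar to $I$: it arises identically for $G$, and is absorbed by whatever convention for graph homomorphisms into simple graphs was fixed when $G$ was established as a functor (namely, allowing adjacent vertices to share an image). Because I am \emph{deducing} the present lemma from the functoriality of $G$ and $d$ rather than re-proving it from scratch, no new convention is required and the degenerate case needs no separate treatment here.
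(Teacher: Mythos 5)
Your proposal is correct, but note that the paper itself offers no proof of this lemma at all: it is one of several results imported verbatim from D\"orfler and Waller, so there is nothing in the text to compare step by step. Your derivation $I=G\circ d$ is a legitimate and economical way to supply the missing argument, and its key ingredient --- the identification $\left[\mathcal{H}\right]_{\mathcal{I}}=\left[\mathcal{H}^{*}\right]_{2}$ --- is in fact echoed by the paper in its later remark that ``the representative graph or line graph is the 2-section of the dual hypergraph,'' so you are proving exactly what the author takes for granted. Your object-level verification is sound: the hyperedges of $\mathcal{H}^{*}$ are the stars $H(v)$, so 2-adjacency in $\mathcal{H}^{*}$ of two distinct hyperedges is precisely non-empty intersection in $\mathcal{H}$. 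The morphism-level step is also right, though it leans on an unstated description of how $d$ acts on arrows (swapping the vertex- and edge-components of $h$); that is the only sensible choice and is consistent with the cited duality lemma, but it deserves the one sentence you give it. Your handling of the degenerate case $h(e_{j_{1}})=h(e_{j_{2}})$ is the correct move: this collapse problem already afflicts $G$ (and is a known wrinkle in D\"orfler--Waller, where the target category of simple graphs forces a convention about adjacent vertices sharing an image), so deducing $I$ from $G$ and $d$ genuinely inherits whatever resolution was adopted there rather than requiring a new one. The one thing your composite argument buys over a direct verification is that identity and composition conservation come for free; the price is that the lemma's truth is now only as solid as the duality lemma, whose covariance under the strict commutativity definition of homomorphism is itself the least obvious of the imported results.
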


\section{Abstract simplicial complexes}

\label{subsec:Simplicial-complexes}

We just make a parenthesis on abstract simplicial complexes which
are in fact very particular hypergraphs.

In \citet{lee2010introduction}, the author introduces \textbf{abstract}
\textbf{simplicial complexes} as a collection $\mathcal{K}$ of nonempty
finite sets called abstract simplexes that is conditioned to only
one rule: if $S\in\mathcal{K}$, then any nonempty subset $S^{\prime}\subseteq S$,
$S^{\prime}\neq\emptyset$ is also in $\mathcal{K}.$ Such a subset
is called a face of $S$ and the elements of $S$ are called the \textbf{vertices}
of $S.$

Defining $V=\bigcup\limits _{S\in\mathcal{K}}S$ and for every $S\in\mathcal{K}$,
$E_{S}=\left(e_{S}\right)_{e_{S}\in\mathcal{P}\left(S\right)}$ and
$E=\underset{S\in\mathcal{K}}{\odot}E_{S}$ where $\odot$ is the
concatenation operation on families, i.e.\ the family of all the
elements of every family that is concatenated. $\mathcal{H}=\left(V,E\right)$
is then the unique hypergraph that corresponds to this simplicial
complex.

Reciprocally, given a hypergraph $\mathcal{H}=\left(V,E\right)$,
such that for every $e\in E$, if for all $e'\subseteq e$, $e'\in E$,
then $\mathcal{H}$ is in correspondence with a unique abstract simplicial
complex.

Simplicial complexes find their strength in the fact that simplicial
homology can be used on them, to study the number of holes of their
structure. Simplicial homology depends only on the associated topological
space and hence gives a way to distinguish simplicial complexes. The
interested reader can refer to \citet{hatcher2005algebraic} for an
introduction on simplicial homology\footnote{\href{http://pi.math.cornell.edu/~hatcher/AT/ATchapters.html}{Course of A. Hatcher on Cornell.edu}}.

As mentioned in \citet{wasserman1994social}, simplicial complexes
are useful to study the overlaps between subsets and the connectivity
of the network, as well as defining a dimensionality of the network.
More complex than hypergraphs by their hierarchical structure, they
have been used in different studies such as the structural analysis
of a game, using q-holes in \citet{gould1979structural}.

Abstract simplicial complexes are used quite often in the literature
for studying co-occurrence networks. In \citet{nikolaev2016modeling},
the author studies systems with higher-order interaction by considering
different models and particularly using a simplicial complex model
and a hypergraph model. It is also the case in \citet{salnikov2018cooccurrence}
where co-occurrence simplicial complexes are used to identify the
homological holes in mathematic knowledge, in order to discover emerging
knowledge. In \citet{benson2018simplicial}, the authors study the
temporal evolution of different co-occurrence datasets in order to
detect and predict higher-order interaction using simplicial closure,
studying particularly the life-cycles of triples and quadruples of
nodes, showing that the edge density and tie strength are two preponderant
factors in the evolution of the group to a closed simplex.

\section{Hypergraph coloring}

The subject of hypergraph coloring extends the one of graph coloring;
but there is not a single way of doing so. \citet{bujtas2015hypergraph}
is a full survey on the subject: we just give here a very basic definition
extracted from it and send the interested reader to this reference
for further details.

A \textbf{proper hypergraph coloring} is a mapping of a vertex of
$V$ to a color number taken in $\left\llbracket \lambda\right\rrbracket $
where $\lambda$ is a nonnegative integer such that for every hyperedge
there are at least two vertices of different colors. The minimum value
of $\lambda$ for which a proper coloring exists is called the \textbf{chromatic
number of the hypergraph} $\mathcal{H}$ and written $\chi\left(\mathcal{H}\right).$
A hypergraph is $k$-colorable if its chromatic number is no more
than $k$ and $k$-chromatic if $\chi\left(\mathcal{H}\right)=k.$

Hypergraph coloring has found applications in finding bounds of the
chromatic number of some graphs in \citet{kierstead1996applications}.
They are used in different optimization problems such as divide and
conquer approach in \citet{lu2004deterministic}. In \citet{voloshin1993mixed,voloshin2002coloring},
the author explores the coloring of mixed hypergraphs, in which the
hyperedge family is partitioned in two sub-families called the hyperedges
and anti-hyperedges. The author applies it to problem of energy supply.
Coloring is also used in partition problems in \citet{lu2004deterministic}.

The subject of hyperedge coloring is also tackled in \citet{gyarfas2013monochromatic}
to find monochromatic paths and cycles in hypergraphs.

\section{Hypergraph partitioning and clustering}

Hypergraph partitioning is defined in the Encyclopedia of Parallel
Computing\footnote{\href{https://link.springer.com/referenceworkentry/10.1007\%2F978-0-387-09766-4_1}{Hypergraph Partitioning in Encyclopedia of Parallel computing}}
as the ``task of dividing a hypergraph into two or more roughly equal-sized
parts such that a cost function on the hyperedges connecting vertices
in different parts is minimized.'' 

Very often this definition is too restrictive and requires strictly
more than two parts. The problem of partitioning a hypergraph is at
least NP-hard as mentioned in \citet{garey2002computers}. In \citet{karypis1999multilevel},
the authors present a hypergraph partitioning algorithm, called hMetis,
based on a multilevel coarsening of the hypergraph, working by iterative
bisections of the coarsened hypergraphs, starting from the smallest
one. In \citet{karypis2000multilevel}, the authors present an other
hypergraph partitioning that is $k$ ways, called hMeTiS-Kway algorithm. 

In \citet{papa2007hypergraph}, the author gives several methods for
hypergraph partitioning and see the action of clustering as ``The
process of computing a coarser hypergraph from an input hypergraph
by merging vertices into larger groups of vertices called clusters.''
Moreover, the author gives several applications of partitioning and
clustering, including VLSI design, numerical linear algebra, automated
theorem-proving and formal verification.

The literature is abundant in applications and methods: for more details
a survey on clustering ensembles techniques has been done in \citet{ghaemi2009survey},
which includes hypergraph partitioning.

Clustering and partitioning require often multi-level strategies approach:
this is a well studied domain, as it has been extensively used for
VLSI design---see for instance, \citet{karypis1999multilevel}---,
for parallel scientific computing---\citet{catalyurek1999hypergraphpartitioningbased},
\citet{devine2006parallel}, \citet{ballard2016hypergraph}---, for
image categorization---\citet{yuchihuang2011unsupervised}---, for
social networks---\citet{zhou2007learning}, \citet{yang2017hypergraph}. 

\section{Application of hypergraphs}

\label{subsec:Application-of-hypergraphs}

Hypergraphs fit to model multi-adic relationships in structures where
the traditional pairwise relationship of graphs is insufficient: they
are used in many areas such as social networks in particular in collaboration
networks---\citet{newman2001scientifica,newman2001scientific}---,
co-author networks---\citet{grossman1995portion}, \citet{taramasco2010academic}---,
chemical reactions---\citet{temkin1996chemical}---, genome\allowbreak---\citet{chauve2013hypergraph}---,
VLSI design---\citet{karypis1999multilevel}. Hypergraphs are also
used in information retrieval for different purposes such as query
formulation in text retrieval---\citet{bendersky2012modeling}---
or in music recommendation---\citet{bu2010music}. Several applications
of hypergraphs exist based on the diffusion process firstly developed
by \citet{zhou2007learning}. \citet{gao20123d} uses \citet{zhou2007learning}
for 3D-object retrieval and recognition by building multiple hypergraphs
of objects based on their 2D-views that are analyzed using the same
approach. In \citet{zhu2015contentbased}, multiple hypergraphs are
constructed to characterize the complex relations between landmark
images and are gathered into a multi-modal hypergraph that allows
the integration of heterogeneous sources providing content-based visual
landmark searches. Hypergraphs are also used in multi-feature indexing
to help image retrieval \citet{xu2016multifeature}. For each image,
a hyperedge gathers the first $n$ most similar images based on different
features. Hyperedges are weighted by average similarity. A spectral
clustering algorithm is then applied to divide the dataset into $k$
sub-hypergraphs. A random walk on these sub-hypergraphs allows to
retrieve significant images: they are used to build a new inverted
index, useful to query images. In \citet{wang2018joint}, a joint-hypergraph
learning is achieved for image retrieval, combining efficiently a
semantic hypergraph based on image tags with a visual hypergraph based
on image features.

\subsubsection{VLSI design}

Very Large Scale Integration---VLSI for short---aims at integrating
numerous transistors and devices into a single chip to create an integrated
circuit. In VLSI design, hypergraphs are intensively used for placement
of chipsets: in this case, the important thing is to achieve an efficient
clustering and partitioning of the hypergraph in order to minimize
connections between elements of the circuit; more details are given
in \citet{papa2007hypergraph}. In this case, the targeted layout
is an orthogonal layout. Dedicated algorithms have been implemented
to achieve this---\citet{eschbach2006orthogonal}, \citet{karypis1999multilevel,karypis2000multilevel}.

\subsubsection{Collaboration networks}

The comments of this Section are directly cited from \citet{ouvrard2017networks}.

As mentioned in \citet{newman2001scientific}, a collaboration is
a multi-adic relationship between occurrences, and, therefore, the
proper modeling is done by hypergraphs. Nonetheless, in \citet{newman2001scientific},
this multi-adic relationship is approximated by a 2-adic relationship
in between pairs of collaborators when it comes to be studied. The
same approximation is made in many other studies such as \citet{ramasco2004selforganization}.

The reasons for this approximation are numerous. It enables the use
of classical graph techniques and properties when studying the different
characteristics of collaboration networks, such as degree distribution,
clustering coefficient and also when applying quantifying metrics.
Today, many different techniques that help with the retrieval of information
from graphs are available.

This 2-adic relationship approximation has been developed in many
articles, where even if the multi-adic relationship of the data was
pointed to be more pertinent, this multi-adic relationship was not
used when getting to clustering. Since the end of the year 2000s,
the limitations of the 2-adic approach are more and more challenged,
as it leads to a partial loss of the information contained in the
multi-adic relationship. As a result, in \citet{estrada2005complex}
complex networks are modeled by hypergraphs.

In \citet{taramasco2010academic}, the authors study the academic
team formation using epistemic hypergraphs where hyperedges are subsets
of unions of a set of agents and a set of concepts. They introduce
new features to characterize the evolution of collaboration networks
taking into account the hypergraphic nature of networks. This paper
brings keystones in the study of a bi-dimensional hypergraph and shows
how the keeping of multi-adic relationships can help to gain in the
understanding of the evolution of a network.

\subsubsection{Recommender systems and hypergraphs}

Recommender systems contain two big methods of recommendation: the
item-item recommendation and the user-item recommendation. Both of
them require to make groups of items that have been bought for either
the purchase of the same item or for the same profile of user.

In \citet{bu2010music}, the authors use a unified hypergraph, i.e.\
a hypergraph that has multiple types of vertices to perform music
recommendation that takes not only into account the preference of
other users and similar music to the ones they listen. They start
by learning on the hypergraph using a cost function with a regularization
term that takes the vector of ranking scores as variable and the query
vector. The optimal solution for the ranking vector is found when
the gradient of the cost function is zero leading to an explicit expression
of the recommendation vector, based on the invert of the $\alpha$-Laplacian
matrix of the hypergraph, where $\alpha$ is learned off-line. In
\citet{lu2012recommender}, news recommendation is achieved by using
a hypergraph partition and a recommendation via ranking on the hypergraph
similar to the previous approach on a hybrid hypergraph that contains
seven different implicit relations with different objects.

In \citet{zhu2016heterogeneous}, a heterogeneous hypergraph is built
regrouping the information on users, tags and documents for document
recommendation. They consider different cost functions, one for annotation
relations, one for tag relations and one for document relations, based
on the Laplacian of the corresponding sub-hypergraphs. The final cost
relation is a linear combination of the three cost functions that
is optimized using the first $k$ eigenvectors corresponding to the
smallest eigenvalues of the total cost matrix.

In \citet{zheng2018novel}, a social network hybrid recommender system
is proposed based on hypergraph topological structure, that combined
with a hybrid matrix factorization model helps to enhance the description
of the interior relationships of a social network; in particular the
neighborhood of the users and items is used.

Additional references can be found to related work in the previous
references and in \citet{lu2012recommender}.

\subsubsection{Hypergraph grammar}

Hypergraph grammars are extension of graph grammars. Graph grammars,
also known as graph rewriting systems, aim at enumerating all the
possible graphs from a starting graph, by considering a set of graph
rewriting rules, searching an occurrence of the pattern graph (i.e.\
the current graph) and replacing it by an instance of the replacement
graph. Graph rewriting is intensively used in software engineering
for construction and verification, in layout algorithms, picture generation
but also in chemistry for the search of new molecules.

Several studies exist on hypergraph grammars. Particularly, in \citet{drewes1997hyperedge},
a handle hyperedge replacement is proposed: a handle is a sub-hypergraph
constituted of a hyperedge with its incident vertices, extending the
hyperedge replacement proposed in \citet{habel1987structural}.

Application of hypergraph grammars to molecular hypergraphs is still
an active subject of research---\citet{kajino2018molecular} for
instance. Hypergraph rewriting is also used for name binding in \citet{yasen2018name},
and also for graph design in \citet{luerssen2007graph}.

\subsubsection{Hypergraphs and linear algebra calculus}

In \citet{catalyurek1999hypergraphpartitioningbased}, the authors
use a hypergraph-partitioning of the rows (or columns) of a sparse
matrix that has to be multiplied by a vector to be used by an iterative
solver in order to achieve parallel computation. The partitioning
aims at minimizing the total communication volume required between
its different segments. The authors use two hypergraphs: one in which
the vertices represent the rows of the matrix and the hyperedges represent
the columns that have a nonzero intersection with the rows and the
other, its dual model.

There are several methods of matrix factorization that use hypergraph
regularization. They are all based on the fact that a hypergraph can
represent a matrix, by considering either the rows as vertices and
columns as hyperedges containing vertices of nonzero elements or reversely
using the dual hypergraph. This approach is applied to propose a hypergraph-based
non-symmetrical nested dissection ordering algorithm for LU decomposition
of sparse matrices in \citet{grigori2010hypergraphbased}.

In \citet{zeng2014image}, the authors propose a Hypergraph regularized
Non-negative Matrix Factorization (HNMF for short) to refine the classical
NMF approach and apply it to image clustering.

In \citet{jin2015lowrank}, the authors propose a multiple hypergraph
regularized low-rank matrix factorization, where the pool of hypergraphs
used for the regularization comes from the selection of neighbors
in the manifold through the bandwidth parameter used in the heat kernel.
The hypergraph Laplacian matrices are combined together allowing the
construction of a multi-hypergraph regularization that is added to
the regularization term of the original Truncated Singular Value Decomposition---TSVD
for short.

In \citet{wu2018nonnegative}, the authors propose a mixed hypergraph
regularized non-negative matrix factorization framework that constrains
the vertices of a hyperedge to be projected onto the same latent subspace.
The heterogeneity comes from the fact that hyperedges are built using
two kind of neighbors, one based on topological information and the
other on similarity information.

\section{Some additional comments}

A common objection to hypergraphs is: ``O.K., hypergraphs extend
graphs, but the same can be done with bipartite graphs.'' To answer
this objection different arguments can be developed. We regroup here
some of the common arguments found either on Internet\footnote{\href{https://cs.stackexchange.com/questions/12769/how-is-a-hypergraph-different-from-a-bipartite-graph}{https://cs.stackexchange.com/questions/12769/how-is-a-hypergraph-different-from-a-bipartite-graph}}\footnote{\href{http://en.wikipedia.org/wiki/Hypergraph\#Bipartite_graph_model}{http://en.wikipedia.org/wiki/Hypergraph\#Bipartite\_graph\_model}}
or in the literature.

First, graphs are particular case of hypergraphs: every graph is a
hypergraph, but not all hypergraphs are graphs. Bipartite graphs are
particular case of graphs: not every graph is a bipartite graph, but
all bipartite graphs are graphs. Hence: 

\[\circled{$\mathcal{B}$}\subsetneq\circled{$\mathcal{G}$}\subsetneq\circled{$\mathcal{H}$}\]

writing \circled{$\mathcal{B}$} the set of all bipartite graphs,
\circled{$\mathcal{G}$} the set of all graphs and \circled{$\mathcal{H}$}
the set of all hypergraphs.

Second, it is true that to a given hypergraph $\mathcal{H}=\left(V,E\right)$
corresponds a bipartite graph $G=\left(V\cup V',E_{G}\right).$ The
bipartite graph is obtained by considering for each hyperedge $e_{j}\in E$
an additional vertex $v_{e_{j}}$. $V'=\left\{ v_{e_{j}}:e_{j}\in E\right\} $
and that there is an edge between $v\in V$ and $v_{e_{j}}\in V'$
if and only if $v\in e_{j}.$ The bipartite graph is then called the
incidence graph of the hypergraph $\mathcal{H}$. Conversely, only
bipartite graphs with no isolated vertex in any of their vertex part
can represent a hypergraph with non-empty hyperedge, as required in
\citet{berge1967graphes,berge1973graphs}.

Third, from the previous argument, we can develop the argument of
factorization. A hypergraph is a factorized version of its incidence
graph: a hyperedge $e_{j}$ is defined as a collection of distinct
vertices: $e_{j}=\left\{ v_{j_{i}}:i\in\left\llbracket n_{j}\right\rrbracket \right\} ,$with
$n_{j}$ representing the cardinality of $e_{j}.$ The associated
bipartite graph, also called incidence graph, is the developed version:
there is an edge---i.e.\ a pair $\left(v_{j_{i}},e_{j}\right)$---per
vertex membership of hyperedge and the vertex set is extended to hold
a vertex representation of the original hyperedge. Hence, the vertex
set of the incidence graph is varying depending on the hyperedges
content, which is not the case for hypergraphs. Hypergraphs bring
the power of sets: operations on sets such as union, intersection,
complementation or subsets are well defined.

Nonetheless, one has to remark that a hypergraph and its dual have
the same corresponding bipartite graph: the bipartite graph confuses
the hypergraph with its dual \citet{dorfler1980categorytheoretical}.

Fourth, bipartite graphs correspond exactly to 2-colorable graphs.
Coloring a hypergraph is defined in \citet{berge1967graphes,berge1973graphs}
as assigning a color to each vertex such that no hyperedge, differing
from a singleton, is monochromatic. In \citet{seymour1974twocolouring},
a hypergraph that is not 2-colourable but where every proper subset
is 2-colourable is called a condenser. Condensers are shown to be
minimal hypergraphs that are not 2-colourable. Examples of such condensers
are: $\left\{ \left\{ 1,2\right\} ,\left\{ 2,3\right\} ,\ldots,\left\{ n-1,n\right\} ,\left\{ n,1\right\} \right\} $
with $n$ odd or $\left\{ \left\{ 1,2,\ldots,n\right\} \right\} \cup\left\{ \left\{ 0,i\right\} :1\leqslant i\leqslant n\right\} $
with $n\geqslant2.$

Fifth, in \citet{zykov1974hypergraphs}, the author insists on the
fact that while the isomorphism between a hypergraph and its incident
bipartite representation, called the König graph, is indubitable,
some of the statements made with hypergraphs have a very ``artificial
and cumbersome character in terms of König representation which obscures
the point''. The author takes the example of the chromatic number
definition of an ordinary graph, which is a very particular case of
hypergraph, in terms of its König representation.

Sixth, in \citet{berge1967graphes,berge1973graphs}, the author focuses
on some problems that have solutions only with hypergraphs: the representative
graph or line graph is the 2-section of the dual hypergraph. Given
a representative graph $G,$ the question raised is to know if there
exists a hypergraph $\mathcal{H}$ having $G$ as its representative
graph. The problem is shown feasible for general hypergraphs, but
\citet{berge1967graphes,berge1973graphs} shows that this problem
has no solution when it is required that $\mathcal{H}$ is $r$-uniform
and that a vertex belongs to more than $r$ cliques. Hence, the graph
shown in Figure \ref{Fig:Counter_example_representative_graph} cannot
be the representative graph of a $2$-uniform hypergraph.

\begin{figure}
\begin{center}

\definecolor{lavander}{cmyk}{0,0.48,0,0} \definecolor{violet}{cmyk}{0.79,0.88,0,0} \definecolor{burntorange}{cmyk}{0,0.52,1,0}
\def\lav{lavander!90} \def\oran{orange!30}
\tikzstyle{superpeers}=[draw,circle,burntorange, left color=\oran,                        text=violet,minimum width=10pt]
\begin{tikzpicture}[auto, thick]   
\foreach \place/\name in {{(0,0)/a}, {(1,1.732)/b}, {(-1,1.732)/c}, {(-2,0)/d},            {(-1,-1.732)/e},{(1,-1.732)/f},{(2,0)/g}}     
	\node[superpeers] (\name) at \place {$\name$};
\foreach \source/\target in {a/b,a/c,b/c,a/d,a/e,d/e,a/f,a/g,f/g}    
	\path (\source) edge (\target);
\end{tikzpicture}

\end{center}

\caption{Counter-example of graph that cannot be the representative graph of
a $2$-uniform hypergraph.}
\label{Fig:Counter_example_representative_graph}
\end{figure}
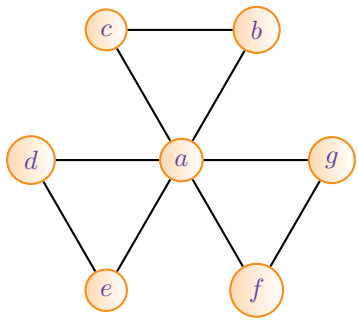

\section{Acknowledgments}

This work was done during the writing of the Thesis of Xavier OUVRARD,
done at University of Geneva, supervised by Pr. Stéphane MARCHAND-MAILLET
and founded by a doctoral position at CERN, in Collaboration Spotting
team, supervised by Dr. Jean-Marie LE GOFF.

\bibliographystyle{plainnat}

\end{document}